\newtheorem{theorem}{Theorem}
\newtheorem{lemma}[theorem]{Lemma}
\newtheorem{corollary}[theorem]{Corollary}
\newtheorem{proposition}[theorem]{Proposition}
\theoremstyle{definition}
\newtheorem{definition}[theorem]{Definition}
\theoremstyle{remark}
\newtheorem{example}[theorem]{example}
\newcommand{\field}[1]{\mathbb{#1}}
\newcommand{\R}{\field{R}} 
\newcommand{\N}{\mathcal{N}} 
\newcommand{\Ga}{G}
\newcommand{\Gr}{\Gamma}
\newcommand{\A}{\mathcal{A}}
\newcommand{\Pro}{\mathcal{O}}
\newcommand{\LPoA}{\textsf{LPoA}}
\newcommand{\SPoA}{\textsf{SPoA}}
\newcommand{\PoA}{\textsf{PoA}}
\newcommand{\Rosenthal}{congestion }
\newcommand{\PoS}{\textsf{PoS}}
\newcommand{\NE}{\text{$\mathbb{NE}$}}
\newcommand{\SE}{\text{$\mathbb{SPO}$}}
\newcommand{\SPO}{\text{$\mathbb{SPO}$}}
\definecolor{mred}{rgb}{1,0,0}
\definecolor{mgreen}{rgb}{0,0.56078,0.17647}
\definecolor{mblue}{rgb}{0.141176,0.188235,0.47843}
\definecolor{mmagenta}{rgb}{0.745098,0.386235,0.643137}
\definecolor{morange}{rgb}{0.949019,0.572549,0.0980392}
\definecolor{mpurple}{rgb}{0.4196078,0.30196,0.6117647}
\newcommand{\rowcolor}[1]{\textcolor{red}{#1}}
\newcommand{\columncolor}[1]{\textcolor{blue}{#1}}
\renewcommand{\l}{\ell}
\DeclareMathOperator*{\argmin}{argmin}
\DeclareMathOperator{\Id}{Id}
\title{\bfseries The Curse of Ties in Congestion Games with \\ Limited Lookahead}
\author[1]{Carla Groenland}
\affil[1]{University of Oxford, \texttt{groenland@maths.ox.ac.uk}}
\author[2]{Guido Sch\"afer}
\affil[2]{CWI and Vrije Universiteit Amsterdam, \texttt{g.schaefer@cwi.nl}}
\begin{document}
\maketitle


\begin{abstract}  
We introduce a novel framework to model limited lookahead in congestion games. Intuitively, the players enter the game sequentially and choose an optimal action under the assumption that the $k-1$ subsequent players play subgame-perfectly. Our model naturally interpolates between outcomes of greedy best-response ($k=1$) and subgame-perfect outcomes ($k=n$, the number of players). We study the impact of limited lookahead (parameterized by $k$) on the stability and inefficiency of the resulting outcomes. As our results reveal, increased lookahead does not necessarily lead to better outcomes; in fact, its effect crucially depends on the existence of ties and the type of game under consideration.

More specifically, already for very simple network congestion games we show that subgame-perfect outcomes (full lookahead) can be unstable, whereas greedy best-response outcomes (no lookahead) are known to be stable. We show that this instability is due to player indifferences (ties). If the game is generic (no ties exist) then all outcomes are stable, independent of the lookahead $k$. In particular, this implies that the price of anarchy of $k$-lookahead outcomes (for arbitrary $k$) equals the standard price of anarchy. For special cases of cost-sharing games and consensus games we show that no lookahead leads to stable outcomes. Again this can be resolved by removing ties, though for cost-sharing games only full lookahead provides stable outcomes. We also identify a class of generic cost-sharing games for which the inefficiency decreases as the lookahead $k$ increases.
\end{abstract}

%



\newpage



\section{Introduction}
\setlength\intextsep{10pt}
\begin{wrapfigure}{r}{2.5cm}
    \begin{tikzpicture}
        \begin{scope}
        \small
            \node at (0,0) [circle,fill=black, scale = 0.4] (s) {};
            \node at (0,-0.25) [] {$o$};
            \node at (0,3) [circle,fill=black, scale = 0.4] (t) {};
            \node at (0,3.25) [] {$d$};
            \node at (0,1.5) [circle,fill=black, scale = 0.4] (m) {};
            
            \draw (s) edge[->,thick, bend right=60] (t);
            \node at (-0.2,0.75) [] {$b$};
            \node at (-0.55,2.25) [] {$\ell$};
            \node at (0.2,2.25) [] {$s$};
            \node at (1,1.5) [] {$m$};
            \draw (s) edge[->,thick] (m);
            \draw (m) edge[->,thick, bend left=50] (t);
            \draw (m) edge[->,thick] (t);
        \end{scope}
    \end{tikzpicture}
\end{wrapfigure}

Consider the following 
situation, where two players want to travel from origin $o$ to destination $d$ in the (extension-parallel) graph on the right. They can take the metro $m$, which takes 6 minutes, or they can take the bike $b$ and then walk either the long (but scenic) route $\ell$, which takes $2$ minutes, or the short route $s$, which takes $1$ minute. There is only one bike: if only one of them takes the bike it takes $3$ minutes; otherwise, someone has to sit on the backseat and it takes them $5$ minutes. 
Both players want to minimize their own travel time. 

Suppose they announce their decisions sequentially. There are two possible orders: either the \rowcolor{red} player 1 moves first or the \columncolor{blue} player 2 moves first. We consider the \emph{sequential-move version} of the game where player $1$ moves first. 
There are three possible \textit{subgames} that player 2 may end up in, for which the corresponding game trees are as follows: 
\begin{center}
\hspace*{-.3cm}
\scalebox{.95}{
\small
\tikzstyle{level 1}=[level distance=1.5cm, sibling distance=1.2cm]
\tikzstyle{bag} = [circle, minimum width=5pt,fill, inner sep=0pt]
\tikzstyle{end} = [circle, minimum width=5pt,fill, inner sep=0pt]
\begin{tikzpicture}[]
\node[bag, label=above:{\rowcolor{$b \ell$}}] {}
    child {
        node[end, black,label=below:
            {$(7,7)$}] {}
        edge from parent[black]
        node[left] {\columncolor{$b \ell$}}
    }
    child {
        node[end, black, label=below:
            {$(5,\textbf{6})$}] {}
        edge from parent[line width=2pt]
        node[right] {\columncolor{$m$}}
    }
    child {
            node[end, label=below:
                {$(7,6)$}] {}
            edge from parent[black]
            node[right] {\columncolor{$bs$}}
    };
\end{tikzpicture}
\quad
\begin{tikzpicture}[]
\node[bag, label=above:{\rowcolor{$m$}}] {}
    child {
        node[end, black,label=below:
            {$(6,5)$}] {}
        edge from parent[black]
        node[left] {\columncolor{$b \ell$}}
    }
    child {
        node[end, black, label=below:
            {$(6,6)$}] {}
        edge from parent[]
        node[right] {\columncolor{$m$}}
    }
    child {
            node[end, label=below:
                {$(6,\textbf{4})$}] {}
            edge from parent[line width=2pt]
            node[right] {\columncolor{$bs$}}
    };
\end{tikzpicture}
\quad
\begin{tikzpicture}[]
\node[bag, label=above:{\rowcolor{$bs$}}] {}
    child {
        node[end, black,label=below:
            {$(6,7)$}] {}
        edge from parent[black]
        node[left] {\columncolor{$b\ell$}}
    }
    child {
        node[end, black, label=below:
            {$(4,6)$}] {}
        edge from parent[black]
        node[right] {\columncolor{$m$}}
    }
    child {
            node[end, label=below:
                {$(6,\textbf{6})$}] {}
            edge from parent[line width=2pt]
            node[right] {\columncolor{$bs$}}
    };
\end{tikzpicture}
}
\end{center}

A \textit{strategy} for player 2 is a function
$
S_2:\{\rowcolor{b\ell},\rowcolor{bs},\rowcolor{m}\}\to\{\columncolor{b\ell},\columncolor{bs},\columncolor{m}\}
$ 
that tells us which action player 2 plays given the action of player 1. Player 2 will always choose an action that minimizes his travel time and may break ties arbitrarily when being indifferent. The boldface arcs give a possible subgame-perfect strategy for player 2. 
If player $2$ fixes this strategy, then player 1 is strictly better off taking the bike and walking the long route (for a travel time of $5$ compared to a travel time of $6$ for the other cases). That is, the only subgame-perfect response for player 1 is $\rowcolor{b\ell}$. This shows $(\rowcolor{b\ell},\columncolor{m})$ is a \emph{subgame-perfect outcome}. However, this outcome is rather peculiar: Why would player 1 walk the long route if his goal is to arrive as quickly as possible? In fact, the outcome $(\rowcolor{b\ell},\columncolor{m})$ is not \emph{stable}, i.e., it does not correspond to a Nash equilibrium.

Subgame-perfect outcomes are introduced as a natural model for farsightedness \cite{leme,selten}, or ``full anticipation'', and have been studied for various types of congestion games \cite{anomalies,bartjasper,jong,leme}. Another well-studied notion in this context are outcomes of \emph{greedy best-response} \cite{fot06,optimalNEmakespan,economicspaper,optimalNE}, i.e., players enter the game one after another and give a best response to the actions played already, thus playing with ``no anticipation''. Fotakis et al. \cite{fot06} proved that greedy best-response leads to stable outcomes on all \emph{series-parallel graphs} (which contain extension-parallel graphs like the one above as a special case).

In fact, in the above example both $(\rowcolor{bs},\columncolor{m})$ and $(\rowcolor{bs},\columncolor{bs})$ are greedy best-response outcomes and they are stable. 
The example thus illustrates that full lookahead may have a negative effect on the stability of the outcomes. 
After a moment's thought, we realize that in the subgame-perfect outcome the indifference of player 2 is exploited (by breaking ties accordingly) to force player 1 to play a suboptimal action. 
Immediate questions that arise are: 
Does full lookahead guarantee stable outcomes if we adjust the travel times such that the players are no longer indifferent (i.e., if we make the game \emph{generic})? 
What is the lookahead that is required to guarantee stable outcomes? What about the inefficiency of these outcomes?
In this paper, we address such questions. 

A well-studied playing technique for chess introduced by Shannon \cite{shan} is to expand the game tree up to a fixed level, use an evaluation function to assign values to the leaves and then perform backward induction to decide which move to make. 
Based on this idea, we introduce \emph{$k$-lookahead outcomes} as outcomes that arise when every player uses such a strategy with $k$ levels of backward induction. 
The motivation for our studies is based on the observation that such limited lookahead strategies are played in many scenarios. 
In fact, there is also experimental evidence (see, e.g., \cite{lamotivation}) that humans perform limited backward induction rather than behaving subgame-perfectly or myopically (i.e., $1<k<n$). 

In general, our notion of $k$-lookahead outcomes can be applied to any game that admits a natural evaluation function for partial outcomes (details will be provided in the full version of this paper). In this paper, we demonstrate the applicability of our novel notion by focussing on congestion games. These games admit a natural evaluation function by assigning a partial outcome its current cost, i.e., the cost it would have if the game would end at that point.

\paragraph{Our model.} 

We introduce \emph{$k$-lookahead outcomes} as a novel solution concept where players enter the game sequentially (according to some arbitrary order) and anticipate the effect of $k-1$ subsequent decisions. 
In a \emph{$k$-lookahead outcome} $A = (A_1, \dots, A_n)$, 
the $i$th player with $i \ge 1$ computes a subgame-perfect outcome in the subgame induced by $(A_1, \dots, A_{i-1})$ with $\min\{k,n-i\}$ players (according to the order) and chooses his corresponding action. Our model interpolates between outcomes of greedy best-response ($k=1$) and subgame-perfect outcomes ($k=n$, the number of players). 
Our main goal is to understand the effect that different degrees of anticipation have on the stability and inefficiency of the resulting outcomes in congestion games. 

We combine limited backward induction with the approach of Paes Leme et al.~\cite{leme} who proposed to study the inefficiency of subgame-perfect outcomes. 
Subgame-perfect outcomes have several drawbacks as model for anticipating players, which are overcome (at least to some extent) by considering $k$-lookahead outcomes instead:
\begin{enumerate}
    \item \textit{Computational complexity}. 
    Computing a subgame-perfect outcome in a congestion game is PSPACE-complete \cite{leme} and it is NP-hard already for 2-player symmetric network congestion games with linear delay functions \cite{bartjasper}. This adds to the general discussion that if a subgame-perfect outcome cannot be computed efficiently, then its credibility as a predicting means of actual outcomes is questionable.
    On the other hand, computing a $k$-lookahead outcome for constant $k$ can be done efficiently by backward induction.
      \item \textit{Limited information}. Due to lack of information players might be forced to perform limited backward induction. Note that in order to expand the full game tree, a player needs to know his successors, the actions that these successors can choose and their respective preferences. In practice, however, this information is often available only for the first few successors. 
  \item \textit{Clairvoyant tie-breaking}. 
Players may be unable to play subgame-perfectly, unless some clairvoyant tie-breaking rule is implemented. To see this, consider the example introduced above. Note that in the subgame induced by $\rowcolor{b\ell}$ (as well as $\rowcolor{bs}$) player $2$ is indifferent between playing $\columncolor{m}$ and $\columncolor{bs}$. Thus, player 1 has no way to play subgame-perfectly with certainty: in order to do so he will need to correctly guess how player 2 is going to break ties. Such clairvoyant tie-breaking is not required for reaching a $k$-lookahead outcome.
\end{enumerate}

\paragraph{Our results.}
We study the efficiency and stability of $k$-lookahead outcomes. We call an outcome stable if it is a Nash equilibrium.
In order to assess the inefficiency of $k$-lookahead outcomes, we introduce the \emph{$k$-Lookahead Price of Anarchy ($k$-$\LPoA$)} which generalizes both the standard Price of Anarchy \cite{kouts} and the Sequential Price of Anarchy \cite{leme} (see below for formal definitions). 

Quantifying the $k$-Lookahead Price of Anarchy is a challenging task in general. In fact, even for the Sequential Price of Anarchy (i.e., for $k = n$) no general techniques are known in the literature. In this paper, we mainly focus on characterizing when $k$-lookahead outcomes correspond to  stable outcomes. 
As a result, our findings enable us to characterize when the $k$-Lookahead Price of Anarchy coincides with the Price of Anarchy. 
We show that this correspondence holds for congestion games that are structurally simple (i.e., symmetric congestion games on extension-parallel graphs), called \emph{simple} below.
Further, a common trend in our findings is that the stability of $k$-lookahead outcomes crucially depends on whether players do not or do have to resolve ties (generic vs.~non-generic games). 
Our main findings in this paper are as follows: 
\begin{enumerate}
    \item We show that for generic simple congestion games the set of $k$-lookahead outcomes coincides with the set of Nash equilibria for all levels of lookahead $k$. 
    As a consequence, we obtain that the $k$-\LPoA\ coincides with the Price of Anarchy (independently of $k$), showing that increased anticipation does not reduce the (worst-case) inefficiency. 
    On the other hand, we show that only full anticipation guarantees the first player the smallest cost, so that anticipation might be beneficial after all. 
    We also show that the above equivalence does not extend beyond the class of simple congestion games. (These results are presented in Section 3.)
    
    \item For non-generic simple congestion games, subgame-perfect outcomes my be unstable (as the introductory example shows) but we prove that they have optimal egalitarian social cost. For the more general class of series-parallel graphs, we prove that the congestion vectors of 1-lookahead outcomes coincide with those of global optima of Rosenthal's potential function. In particular, this implies that the $1$-$\LPoA$ is bounded by the Price of Stability. 
    (See Section 3.)
    
    \item We also study cost-sharing games and consensus games (see below for definitions). For consensus games, subgame-perfect outcomes may be unstable. In contrast, if players break their ties consistently all $k$-lookahead outcomes are optimal. 
    Similarly, for non-generic cost-sharing games we show that even in the symmetric singleton case subgame-perfect outcomes may be unstable. For both symmetric and singleton games this can be resolved by removing the ties.  
    We also observe a threshold effect with respect to the anticipation level. For generic symmetric cost-sharing games, we show $k$-lookahead outcomes are stable but guaranteed to be optimal only for $k=n$. For affine delay functions the $k$-$\LPoA$ is non-increasing (i.e., the efficiency improves with the anticipation).
    For generic singleton cost-sharing games, $k$-lookahead outcomes are only guaranteed to be stable for $k=n$. 
    (These results can be found in Section 4.)
\end{enumerate}

\paragraph{Related work.}
The idea of limited backward induction dates back to the 1950s \cite{shan} and several researchers in artificial intelligence (see, e.g., \cite{lookaheadAI}) investigated it in a game-theoretic setting.
Mirrokni et al. \cite{lookaheadVetta} introduce \textit{$k$-lookahead equilibria} that incorporate various levels of anticipation as well. Their motivation for introducing these equilibria is very similar to ours, namely to provide an accurate model for actual game play. 
However, their $1$-lookahead equilibria correspond to Nash equilibria rather than greedy best-response outcomes and none of the equilibria correspond to subgame-perfect outcomes. 
Moreover, lookahead equilibria are not guaranteed to exist. For example, Bilo et al. \cite{lookaheadBilo} show that symmetric singleton congestion games do not always admit 2-lookahead equilibria (for the ``average-case model''). 

Subgame-perfect outcomes are special cases of $n$-lookahead outcomes.
Paes Leme et al.~\cite{leme} generalize the Price of Anarchy notion to subgame-perfect outcomes and show that the \emph{Sequential Price of Anarchy} can be much lower than the Price of Anarchy if the game is generic. On the other hand, this does not necessarily hold if the game is non-generic (see, e.g.,  \cite{anomalies,bartjasper,crowdgames}). 




\section{Lookahead outcomes}
\label{sec:klo}

We formally define congestion games and introduce some standard notation. We then introduce our notion of $k$-lookahead outcomes and the inefficiency measures studied in this paper. Finally, we comment on the impact of ties and different player orders in these games.

\paragraph{Congestion games.}
A \textit{\Rosenthal game} is a tuple $\Ga = (\N, R,(\A_i)_{i\in \N},(d_r)_{r\in R})$ where $\N=[n]$ is a finite set of players, $R$ a finite set of \textit{resources}, $\A_i\subseteq 2^R$ the \emph{action set} of player $i$, and $d_r:\mathbb{N} \to \R_{\geq 0}$ a \textit{delay function} for every resource $r\in R$.\footnote{We use $[n]$ to denote the set $\{1,\dots,n\}$, where $n$ is a natural number.} 
Unless stated otherwise, we assume that $d_r$ is non-decreasing.
We define $\Pro = \prod_{i \in \N} \A_i$ as the set of \emph{action profiles} or \emph{outcomes} of the game. 
Given an outcome $A\in \Pro$, the \textit{congestion vector} $x(A)=(x(A)_r)_{r\in R}$ specifies the number of players picking each resource, i.e., $x(A)_r = |\{i\in \N: r\in A_i\}|$.
The cost function $c_i$ of player $i\in \N$ is given by
$c_i(A)=\sum_{r\in A_i} d_r(x(A)_r)$. We call a congestion game \textit{symmetric} if $\A_i=\A_j = \A$ for all $i, j\in \N$.

We say that $A_i$ is a \textit{best response} to $A_{-i}$ if 
$c_i(A_i, A_{-i}) \le c_i(B_i, A_{-i})$ 
for all $B_i\in \A_i$.\footnote{We use the standard notation $A_{-i}=(A_1,\dots,A_{i-1},A_{i+1},\dots,A_n)$ and $(B_i,A_{-i})=(A_1,\dots,A_{i-1},B_i,A_{i+1},\dots,A_n)$.}
An outcome $A$ of a \Rosenthal game $\Ga$ is a \emph{(pure) Nash equilibrium (NE)} if for all $i\in \N$, $A_i$ is a best response to $A_{-i}$. 
We use $\NE(\Ga)$ to denote the set of all Nash equilibria of $\Ga$.


An \textit{order} on the players is a bijection $\sigma:\N\to[n]$. We denote the sequential-move version of a game $\Ga$ with respect to order $\sigma$ by $\Ga^\sigma$. The outcome on the equilibrium path of a subgame-perfect equilibrium in $\Ga^\sigma$ is an action profile of $\Ga$ and we refer to it as the \textit{subgame-perfect outcome (SPO)}. 
We use $\SPO(\Ga)$ to refer to the set of all subgame-perfect outcomes (with respect to any order of the players) of a game $\Ga$.

\paragraph{$k$-lookahead outcomes.}

Let $\Ga=(\N,R,(\A_i)_{i\in \N},(d_r)_{r\in R})$ be a congestion game and $\sigma$ an order on the players. For $k\in[n]$, define $\Ga^k(\sigma) =(\N',R,(\A_i)_{i\in \N'},(d_r)_{r\in R})$ as the congestion game with $\N'=\sigma^ {-1}\{1,\dots,k\}$ which we obtain from $G$ if only the $k$ first players (according to $\sigma$) play. 
Let $\Ga^k := \Ga^k(\Id_{[n]})$, where $\Id_{[n]}$ is the identity order (i.e., $\Id_{[n]}(i) = i$). 
For notational convenience, for $k>n$ we set $\Ga^k(\sigma):=\Ga^n(\sigma)=\Ga$. Further, if the order $\sigma$ is defined on a larger domain than the player set of $G$, we define $\Ga^k(\sigma):=\Ga^k(\tau)$ for $\tau:\N\to[n]$ the unique bijection satisfying $\tau(i)<\tau(j)$ iff $\sigma(i)<\sigma(j)$ for all $i,j\in \N$ 


\begin{definition}
Let $\Ga$ be an $n$-player congestion game and let $k\in [n]$. 
An action profile $A$ is a \textbf{$k$-lookahead outcome} of $\Ga$ if there exists an order $\sigma$ on the players so that for each $i\in \N$ we have that $A_{i}$ equals the action $B_i$ played by player $i$ in some subgame-perfect outcome $B$ of $(\Ga')^{k}(\sigma)$ that corresponds to the order $\sigma$, where $\Ga'$ is the subgame of $\Ga$  induced by $(A_j)_{\sigma(j)<\sigma(i)}$.\footnote{Note that we would need to write $(\Ga')^{\min\{k,n-\sigma(i)+1\}}$ instead of $(\Ga')^{k}$ without our assumption that $\Ga^\ell=\Ga^n$ for $\ell>n$.}
\end{definition}

We say a $k$-lookahead outcome corresponds to the order $\tau$ if $\tau$ can be used as the order $\sigma$ in the definition above. We also define a $k$-lookahead outcome for $k>n$ as an $n$-lookahead outcome. 
We use $k\text{-}\mathbb{LO}(\Ga)$ to denote the set of all $k$-lookahead outcomes of a game $\Ga$. 

Assuming $\sigma=\Id_{[n]}$ for ease of notation, $A$ is a $k$-lookahead outcome (corresponding to the identity) if and only if $A_1$ is the action played by the first player in a subgame-perfect outcome (corresponding to the identity) of $\Ga^k$ and $A_{-1}$ is a $k$-lookahead outcome (corresponding to the identity) in the game induced by $(A_1)$. 

\begin{example}\label{ex:1}
Let $\Ga$ be a congestion game with $R=\{r,s,t\}$, $\A_1=\{r,s\}$, $\A_2=\{s,t\}$ and $\A_3=\{t\}$. Let the delay functions be given by
$d_r(x)=2x, ~d_s(x)=1.5x, ~d_t(x)=2x.$
Suppose the players enter the game in the order $1,2,3$ and all anticipate their own decision and the next player ($k=2$). Player 1 then computes the unique subgame-perfect outcome depicted on the left in Figure~\ref{fig:1}. 
Hence he chooses the resource $s$. This brings player 2 in the subgame whose game tree is depicted on the right in Figure~\ref{fig:1}. His unique subgame-perfect choice is $s$. This shows that the only 2-lookahead outcome corresponding to the identity of this game is $(s,s,t)$. 
\end{example}

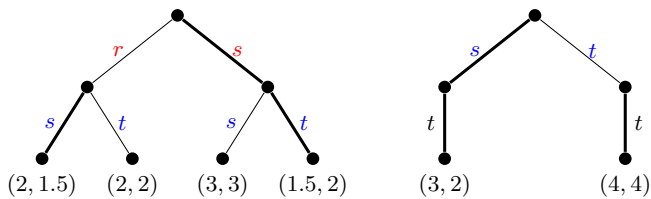
\begin{figure}[t]
\begin{center}
\scalebox{0.95}{\small
\tikzstyle{level 1}=[level distance=1cm, sibling distance=2.5cm]
\tikzstyle{level 2}=[level distance=1cm, sibling distance=1.25cm]
\tikzstyle{bag} = [circle, minimum width=5pt,fill, inner sep=0pt]
\tikzstyle{end} = [circle, minimum width=5pt,fill, inner sep=0pt]
\hspace*{-.3cm}
\begin{tikzpicture}[]
\node[bag] {}
    child {
        node[bag] {}        
            child {
                node[end, black,label=below:
                    {$(2,1.5)$}] {}
                edge from parent[very thick]
                node[left] {$\columncolor{s}$}
            }
            child {
                node[end, black, label=below:
                    {$(2,2)$}] {}
                edge from parent[thin]
                node[right] {$\columncolor{t}$}
            }
            edge from parent[thin] 
            node[left] {$\rowcolor{r}$}
    }
    child {
        node[bag] {}        
        child {
                node[end, label=below:
                    {$(3,3)$}] {}
                edge from parent[thin]
                node[left] {$\columncolor{s}$}
            }
            child {
                node[end, label=below:
                    {$(1.5,2)$}] {}
                edge from parent[very thick]
                node[right] {$\columncolor{t}$}
            }
        edge from parent[very thick]         
            node[right] {$\rowcolor{s}$}
    };
\end{tikzpicture} \qquad
\begin{tikzpicture}[]
\node[bag] {}
    child {
        node[bag] {}        
            child {
                node[end, black,label=below:
                    {$(3,2)$}] {}
                edge from parent[very thick]
                node[left] {${t}$}
            }
            edge from parent[very thick] 
            node[left] {$\columncolor{s}$}
    }
    child {
        node[bag] {}        
            child {
                node[end, label=below:
                    {$(4,4)$}] {}
                edge from parent[very thick]
                node[right] {${t}$}
            }
        edge from parent[thin]         
            node[right] {$\columncolor{t}$}
    };
\end{tikzpicture}
}
\end{center}
\vspace*{-.2cm}
\caption{Illustration of game trees for Example~\ref{ex:1}.}\label{fig:1}
\end{figure}

\paragraph{Inefficiency of lookahead outcomes.}

We introduce the \emph{$k$-Lookahead Price of Anarchy ($k$-\LPoA)} to study the efficiency of $k$-lookahead outcomes, which generalizes both the Price of Anarchy \cite{kouts} and the Sequential Price of Anarchy \cite{leme}. 
We consider two social cost functions in this paper: Given an outcome $A$, the \emph{utilitarian social cost} is defined as $SC(A)=\sum_{i\in \N}c_i(A)$; the \emph{egalitarian social cost} is given by $W_A=\max_{i\in \N}c_i(A)$. 

Given a game $\Ga$, let $A^*$ refer to an outcome of minimum social cost. 
The \textit{$k$-Lookahead Price of Anarchy ($k$-\LPoA)} of a congestion game $\Ga$ is 
\begin{equation}\label{eq:LPoA}
k\text{-}\LPoA(\Ga) = \max_{A\in k\text{-}\mathbb{LO}(\Ga)} \frac{\sum_{i\in \N}c_i(A)}{\sum_{i\in \N}c_i(A^*)}.
\end{equation} 
Here the $k$-\LPoA\ is defined for the utilitarian social cost; it is defined analogously for the egalitarian social cost.
Recall that the \emph{Price of Anarchy (\PoA)} and the \emph{Sequential Price of Anarchy (\SPoA)} refer to the same ratio as in \eqref{eq:LPoA} but replacing ``$k\text{-}\mathbb{LO}(\Ga)$'' by ``$\NE(\Ga)$'' and ``$\SPO(\Ga)$'', respectively. The \emph{Price of Stability (\PoS)} refers to the same ratio, but minimizing over the set of all Nash equilibria.

\paragraph{Curse of ties.}

When studying sequential-move versions of games, results can be quite different depending on whether or not players have to resolve ties. We next introduce two notions to avoid/resolve ties. 

We first introduce the notion of a \emph{generic} congestion game. 
Intuitively, this means that every player has a unique preference among all available actions. 

\begin{definition}
\label{def:genericgame}
A \Rosenthal game $\Ga$
is \textbf{generic} if for all $N\subseteq \N$, $A,B\in\prod_{i\in N} \A_i$ and $j\in N$, $A_j\neq B_j$ implies $c_j(A) \neq c_j(B)$. 
\end{definition}

Note that if $\Ga$ is generic then $\Ga^k(\sigma)$ is also generic for every order of the players $\sigma$ and $k\in \mathbb{N}$ and every induced subgame is generic as well.
Hence if $A$ is the unique subgame-perfect outcome (say respect to the identity) of $\Ga$, then the subgame $G'$ induced by $(A_1)$ is again generic, so that $(A_2,\dots,A_n)$ must be the \emph{only} subgame-perfect outcome (with respect to the identity) of $G'$. With induction, it then follows that for generic games, there is a unique $n$-lookahead outcome which is equal to the unique subgame-perfect outcome. For non-generic games, each subgame-perfect outcome is an $n$-lookahead outcome, but the reverse may be false. 

To see this, consider a congestion game with $R=\{r,s,t\}$, $\A_1=\{r,s\}$ and $\A_2=\{s,t\}$. Let $d_r(x_r)=2x_r$, $d_s(x_s)=2x_s$ and $d_t(x_t)=4$. Then $(s,t)$ is a subgame-perfect outcome, correspondig to the SPE depicted to the left below, so player 1 may choose $s$ in an $n$-lookahead outcome. 

\bigskip
\begin{center}
\scalebox{0.95}{\small
\tikzstyle{level 1}=[level distance=1cm, sibling distance=3cm]
\tikzstyle{level 2}=[level distance=1cm, sibling distance=1.5cm]
\tikzstyle{bag} = [circle, minimum width=5pt,fill, inner sep=0pt]
\tikzstyle{end} = [circle, minimum width=5pt,fill, inner sep=0pt]
\begin{tikzpicture}[]
\node[bag] {}
    child {
        node[bag] {}        
            child {
                node[end, black,label=below:
                    {$(2,2)$}] {}
                edge from parent[very thick]
                node[left] {$\columncolor{s}$}
            }
            child {
                node[end, black, label=below:
                    {$(2,4)$}] {}
                edge from parent[thin]
                node[right] {$\columncolor{t}$}
            }
            edge from parent[thin] 
            node[left] {$\rowcolor{r}$}
    }
    child {
        node[bag] {}        
        child {
                node[end, label=below:
                    {$(4,4)$}] {}
                edge from parent[thin]
                node[left] {$\columncolor{s}$}
            }
            child {
                node[end, label=below:
                    {$(2,4)$}] {}
                edge from parent[very thick]
                node[right] {$\columncolor{t}$}
            }
        edge from parent[very thick]         
            node[right] {$\rowcolor{s}$}
    };
\end{tikzpicture}\qquad
\tikzstyle{level 1}=[level distance=1cm, sibling distance=1.5cm]
\begin{tikzpicture}[]
\node[bag] {}
            child {
                node[end, black,label=below:
                    {$(4)$}] {}
                edge from parent[very thick]
                node[left] {$\columncolor{s}$}
            }
            child {
                node[end, black, label=below:
                    {$(4)$}] {}
                edge from parent[thin]
                node[right] {$\columncolor{t}$}
            };
\end{tikzpicture}
}
\end{center}

However, in the subgame induced by $(s)$, whose game tree is depicted to the right, both $s$ and $t$ are subgame-perfect (player 2 is indifferent), so player 2 may respond $s$, giving the $n$-lookahead outcome $(s,s)$. This is not a subgame-perfect outcome. 

Rather than assuming that no ties exist, we can also restrict the definition of a subgame-perfect outcome:
A \textit{tie-breaking rule} for player $i$ is a total partial order $\succ_i$ on the action set $\A_i$. If player $i$ adopts the tie-breaking rule $\succ_i$, then $A_i$ only is a best response to $A_{-i}$ if
$c_i(A_i,A_{-i})<c_i(B,A_{-i}) \text{ for all } B\in \A_i \text{ such that } B\succ_i A.$
In particular, $\succ_i$ ensures that player $i$ has a unique best response. As a result, there is a unique subgame-perfect outcome for each tie-breaking rule and order of the players. For symmetric congestion games, we can consider the special case of a \textit{common tie-breaking rule} $\succ$ on $\A$ that all players adopt.

\paragraph{Effect of the player order.} Whether all subgame-perfect outcomes are stable may depend on the order of the players. For example, for consensus games all subgame-perfect outcomes corresponding to a \emph{tree respecting order} are stable (see Example \ref{ex:treeorder}). Whenever we make a claim such as ``all subgame-perfect outcomes are stable'', this should be read as ``subgame-perfect outcomes are stable \emph{for all orders}''. 

\begin{theorem}
If $\Ga$ is a symmetric congestion game and $A$ a $k$-lookahead outcome with respect to $\Id_{[n]}$, then $(A_{\sigma(1)},\dots, A_{\sigma(n)})$ is a $k$-lookahead outcome with respect to $\sigma$ for every order $\sigma$.
\end{theorem}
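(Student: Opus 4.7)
My plan is to prove this by induction on the number of players $n$. The base case $n=1$ is immediate, since a single-player outcome is a $k$-lookahead outcome with respect to the unique (trivial) order regardless of relabeling. The backbone of the argument is the observation that in a symmetric congestion game, permuting the identities of the players is a game isomorphism: a player's cost depends only on the congestion vector induced by the profile, and this vector is invariant under permuting which player chose which action.

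For the inductive step I would let $A$ be a $k$-lookahead outcome with respect to $\Id_{[n]}$, let $\sigma$ be any order, and set $B_i := A_{\sigma(i)}$; the goal is to show $B$ is a $k$-lookahead outcome with respect to $\sigma$. Let $j = \sigma^{-1}(1)$ be the first mover under $\sigma$, so that $B_j = A_1$. By hypothesis, $A_1$ is the action played by player $1$ in some subgame-perfect outcome of $\Ga^k$. My plan is to transport this subgame-perfect outcome via the bijection $\varphi$ that sends the $\ell$-th mover of $\Ga^k$ to the $\ell$-th mover of $\Ga^k(\sigma)$, i.e.\ $\ell \mapsto \sigma^{-1}(\ell)$ for $\ell \in \{1,\dots,k\}$. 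Because the games are symmetric, $\varphi$ preserves action sets and costs, so it is a game-tree isomorphism; subgame-perfect outcomes of $\Ga^k$ therefore correspond under $\varphi$ to subgame-perfect outcomes of $\Ga^k(\sigma)$, and the action of the first mover is preserved. This yields that $A_1$ is the action of the first mover $j$ in some subgame-perfect outcome of $\Ga^k(\sigma)$, matching the first clause of the $k$-lookahead definition for $B$ under $\sigma$.

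Next I would consider the subgame $\Ga'$ induced by the first mover choosing $A_1$. In the identity case $\Ga'$ has player set $\{2,\dots,n\}$, while in the $\sigma$ case it has player set $\N\setminus\{j\}$. Both subgames share the same resource set and the same common remaining action set, and the delay functions are shifted identically (one unit of load added to each resource in $A_1$). So $\Ga'$ is again symmetric, and the two versions are isomorphic via the analogous player-renaming. Applying the inductive hypothesis to $\Ga'$ with the order obtained by restricting $\sigma$ to $\N\setminus\{j\}$ (and renumbering as in the convention stated just before the $k$-lookahead definition), the remaining coordinates $(B_i)_{i\ne j}$ form a $k$-lookahead outcome of $\Ga'$ with respect to this restricted order. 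Combined with the first-mover argument, $B$ is a $k$-lookahead outcome of $\Ga$ with respect to $\sigma$.

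I do not anticipate a substantive obstacle beyond careful bookkeeping of indices; the entire content is that symmetric congestion games are invariant under player relabeling and that this invariance cascades through the recursive structure of the $k$-lookahead definition. The one subtlety worth flagging is that subgame-perfect outcomes need not be unique when ties are present, but $\varphi$ matches each subgame-perfect outcome of $\Ga^k$ bijectively with one of $\Ga^k(\sigma)$, so it suffices that $A_1$ is played by player $1$ in \emph{some} subgame-perfect outcome; the image under $\varphi$ then witnesses the same property for player $j$ in $\Ga^k(\sigma)$, as required by the definition.
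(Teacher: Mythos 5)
Your proposal is correct and follows essentially the same route as the paper's proof: both exploit that symmetry makes the game trees of $(\Ga^k)^{\Id}$ and $(\Ga^k(\sigma))^{\sigma}$ identical up to relabelling of players, so a subgame-perfect equilibrium (and hence the first mover's action) transports across, and then recurse on the subgame induced by the first action. The paper phrases the recursion as ``similar argumentation'' on the subgame rather than a formal induction on $n$, but the content is the same.
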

\begin{proof}
Let $A$ be a $k$-lookahead outcome with respect to $\Id_{[n]}$. Then
$A_1$ is the first action of some SPO $B$ with respect to the order $\Id_{[n]}$ in $\Ga^{k}(\Id)$. Let $S$ be a subgame-perfect equilibrium inducing the SPO $B$. 

In the sequential move-version $(\Ga^k(\sigma))^{\sigma}$, the root node is a decision node for player $i=\sigma^{-1}(1)$ and $S_1$ is a valid strategy for player $i$ (since the game is symmetric). Similarly, $S_{2},\dots,S_k$ are valid strategies for $\sigma^{-1}(2),\dots,\sigma^{-1}(k)$. In fact, the game tree of $(\Ga^k)^{\Id}$ is the same as the game tree of $(\Ga^k(\sigma))^{\sigma}$ up to a relabelling of the players. This means that in both games we verify the same equations when determining whether $S$ is a subgame-perfect equilibrium. Hence $S$ is a subgame-perfect equilibrium in $(\Ga^k(\sigma))^{\sigma}$ as well. In the corresponding subgame-perfect outcome of $\Ga^k(\sigma)$, player $i=\sigma^{-1}(1)$ plays $B_1=A_1=A_{\sigma(i)}$. 
Similar argumentation shows that in the subgame $\Ga'$ induced by $(A_1)$, the action $A_2$ is the action of $\sigma^{-1}(2)$ in a subgame-perfect outcome of $(\Ga')^k(\sigma)$.
\end{proof}
This theorem allows us to assume that the order is the identity when proving stability for symmetric games. Moreover, any permutation of a $k$-lookahead outcome is again a $k$-lookahead outcome, which is a useful fact that we exploit in the proofs below.

\medskip
\noindent
Due to lack of space, some of the proofs are omitted from the main text below and will be provided in the full version of the paper.

\section{Symmetric network congestion games}
\label{sec:sncg}

A \textit{single-commodity network} is a directed multigraph $\Gr=(V,E)$ with two special vertices $o,d\in V$ such that each arc $a\in E$ is on at least one directed $o,d$-path.
In a \textit{symmetric network congestion game (SNCG)}, the common set of actions $\A$ is given by the set of all directed paths in a single-commodity network $\Gamma=(V,R)$. 
A \textit{series-parallel graph (SP-graph)} either consists of (i) a single arc, or (ii) two series-parallel graphs in parallel or series. 
An \textit{extension-parallel graph (EP-graph)} either consists of (i) a single arc, (ii) two extension-parallel graphs in parallel, or (iii) a single arc in series with an extension-parallel graph.

In our proofs we exploit the following equivalences:
\begin{lemma}[Nested intersections property]
The common action set $\A$ of each SNCG on an EP-graph satisfies the following three equivalent properties:
\begin{enumerate}

\item For all distinct $A,B,C \in \mathcal{A}$ either $A\cap B\subseteq A\cap C$ or $A \cap B \supseteq A \cap C$. 

\item For all distinct $A,B,C \in \mathcal{A}$, 
$A\cap B \not \subset C$ implies $A \cap C = B\cap C = A\cap B\cap C$. 

\item There is no \emph{bad configuration} (see \cite{strongeq}), i.e., for all distinct $A,B,C \in \mathcal{A}$, $A \cap \left( C \setminus B\right) =\emptyset$ or $A \cap \left( B \setminus C \right) =\emptyset$. 
\end{enumerate}
\end{lemma}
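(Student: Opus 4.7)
The plan is to split the statement into two pieces: the equivalence of the three properties, which is a purely set-theoretic matter independent of the graph, and the verification that the common action set $\A$ of an SNCG on an EP-graph satisfies property~(1), which I would prove by structural induction on the EP-decomposition of $\Gamma$.

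For the equivalences I would go around the cycle (1)$\Rightarrow$(2)$\Rightarrow$(3)$\Rightarrow$(1). To see (1)$\Rightarrow$(2), suppose $A\cap B\not\subseteq C$. Then $A\cap B\subseteq A\cap C$ would force $A\cap B\subseteq C$, a contradiction, so property~(1) yields $A\cap B\supseteq A\cap C$; combining with $A\cap C\subseteq C$ gives $A\cap C\subseteq A\cap B\cap C$, and the reverse inclusion is trivial, so $A\cap C=A\cap B\cap C$. Swapping the roles of $A$ and $B$ yields $B\cap C=A\cap B\cap C$ in the same way. For (2)$\Rightarrow$(3), if both $A\cap(B\setminus C)$ and $A\cap(C\setminus B)$ were nonempty, the first would witness $A\cap B\not\subseteq C$, so~(2) would force $A\cap C=A\cap B\cap C$, contradicted by any element of the second. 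For (3)$\Rightarrow$(1), a failure of~(1) provides $x\in (A\cap B)\setminus C$ and $y\in (A\cap C)\setminus B$, immediately violating~(3).

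For the structural part I would induct on the EP-decomposition of $\Gamma$. The base case of a single arc is vacuous since $|\A|\le 1$. In a parallel composition $\Gamma=\Gamma_1\parallel\Gamma_2$, each $o,d$-path lies entirely in one of the two summands, whose arc sets are disjoint; if $A,B,C$ all lie in the same summand the induction hypothesis applies, and otherwise at least one intersection among the three is empty, trivially satisfying~(1). In the series case $\Gamma=e\cdot\Gamma'$, every $o,d$-path uses the arc $e$ followed by a path in $\Gamma'$, so we can write $A=\{e\}\cup A'$, $B=\{e\}\cup B'$, $C=\{e\}\cup C'$ with $A',B',C'$ distinct paths in $\Gamma'$; since $A\cap B=\{e\}\cup(A'\cap B')$ and similarly for the other intersections, property~(1) for $A,B,C$ reduces literally to property~(1) for $A',B',C'$, which holds by induction. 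The symmetric case in which the single arc appears after $\Gamma'$ is identical.

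The one subtle point, and the reason the hypothesis is an EP-graph rather than an arbitrary SP-graph, is case~(iii): it is essential that the series composition is with a single arc. If one glued two larger graphs in series, three paths could disagree on both sides of the series cut in different ways, yielding a bad configuration at the union. The reduction $A\cap B=\{e\}\cup(A'\cap B')$ works precisely because the portion shared through the cut by any two paths is exactly the fixed arc $e$, so the nested-intersection structure is inherited intact from $\Gamma'$. Aside from keeping the case analysis clean, I expect no real obstacle in executing this plan.
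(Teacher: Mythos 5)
Your proof is correct, and it takes a more self-contained route than the paper. The paper does not actually prove this lemma in the text: it only remarks that the non-trivial fact --- an SNCG has no bad configuration if and only if its network is extension-parallel --- is due to Milchtaich, and leaves the set-theoretic equivalences implicit. Your cycle (1)$\Rightarrow$(2)$\Rightarrow$(3)$\Rightarrow$(1) is sound; in particular, the hypothesis $A\cap B\not\subseteq C$ is symmetric in $A$ and $B$, so applying property (1) to the triple $(B,A,C)$ does give the second equality in (2) as you claim. Your structural induction establishes property (1) directly from the EP-decomposition instead of invoking Milchtaich's theorem for property (3): this buys an elementary, fully self-contained proof of exactly the direction the lemma needs (EP implies nested intersections), at the cost of not giving the converse, which the citation covers but which the paper never uses. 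Your diagnosis of why the series step must be with a single arc --- so that the shared portion across the series cut is exactly $\{e\}$ and the comparison of $A\cap B$ with $A\cap C$ reduces verbatim to $\Gamma'$ --- is precisely the point where general series-parallel graphs fail. One cosmetic remark on the parallel case: rather than ``at least one intersection among the three is empty,'' it is cleaner to observe that if $B$ (resp.\ $C$) lies in a different summand from $A$ then $A\cap B=\emptyset\subseteq A\cap C$ (resp.\ $A\cap C=\emptyset\subseteq A\cap B$), so one of the two required containments holds trivially in every distribution of the three paths other than all-in-one-summand.
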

The non-trivial part that an SNCG has no bad configuration if and only if its network is extension-parallel is shown by Milchtaich \cite{milnetw}.

Fotakis et al. \cite{fot06} show that each 1-lookahead outcome is a Nash equilibrium for SNCG on SP-graphs.
We prove that the converse also holds for EP-graphs. 
\begin{theorem}
\label{thm:NEisBRP}
For every SNCG on an EP-graph, the set of 1-lookahead outcomes coincides with the set of Nash equilibria.
\end{theorem}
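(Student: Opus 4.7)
The inclusion $1\text{-}\mathbb{LO}(G)\subseteq\NE(G)$ follows from the result of Fotakis et al.~\cite{fot06} quoted above, so the content of the theorem lies in the reverse inclusion. My plan is to take an arbitrary Nash equilibrium $A$ of the SNCG on an EP-graph $\Gamma$ and produce an order $\sigma$ on the players under which greedy best-response reconstructs $A$. I will build $\sigma$ one step at a time, at each step peeling off a player whose action in $A$ is already a best response given the placements chosen so far.

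\paragraph{Key lemma.}
The engine of the argument is a single-step statement that I will prove by structural induction on the EP-decomposition of $\Gamma$: \emph{in any NE $A$ of an SNCG on an EP-graph, some player $p$ has $A_p$ equal to a shortest $o$-$d$ path in the one-player game}. The base case of a single arc is trivial. If $\Gamma$ is a single arc $e$ in series with $\Gamma'$, then $e$ is used by every player with congestion $n$, so its contribution is a constant additive term in every cost; the NE on $\Gamma$ therefore restricts to an NE on $\Gamma'$ and the induction hypothesis applies. If $\Gamma=\Gamma_1\parallel\cdots\parallel\Gamma_k$, I may assume without loss of generality that $\Gamma_1$ contains a shortest single-player path $P^*$ of cost $c^*$, and split cases. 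If some player uses $\Gamma_1$, the restriction of $A$ to $\Gamma_1$ is an NE on $\Gamma_1$ (parallel sub-graphs are arc-disjoint, so deviations within $\Gamma_1$ are unaffected by players outside) and induction produces the required player. If no player uses $\Gamma_1$, then for every player $i$ the deviation to $P^*$ has cost exactly $c^*$ because the arcs of $P^*$ carry zero congestion; the NE inequality $c_i(A)\le c^*$ combined with the trivial bound $c_i(A)\ge c^*$ (since $c^*$ is the minimum single-player cost) forces $c_i(A)=c^*$, and monotonicity of the delays then squeezes $d_r(x_r)=d_r(1)$ on $A_i$, making $A_i$ itself a shortest single-player path.

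\paragraph{Main induction on $n$.}
With the lemma in hand I induct on $n$. The case $n=1$ is trivial. For the step, pick $p$ via the lemma and pass to the reduced game on the remaining $n-1$ players with shifted delays $\tilde d_r(x):=d_r(x+[r\in A_p])$; this is again an SNCG on the same EP-graph $\Gamma$ with non-decreasing delays. A direct calculation shows that for every $j\ne p$ both the cost of $A_j$ and the cost of any deviation agree with the corresponding quantities in the full game, so $A_{-p}$ is an NE of the reduced game. By induction $A_{-p}$ is a 1-lookahead outcome of the reduced game under some order $\sigma'$ on $\{1,\ldots,n\}\setminus\{p\}$, and $(p,\sigma'(1),\ldots,\sigma'(n-1))$ witnesses $A$ as a 1-lookahead outcome of the full game: player $p$ may choose $A_p$ when playing alone because $A_p$ is a shortest path, and each later best response in the full game (with $p$ already fixed at $A_p$) coincides with the corresponding best response in the reduced game.

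\paragraph{Where the difficulty lies.}
The only substantive step is the parallel case of the key lemma when no player uses the branch containing the shortest path. There I rely crucially on arc-disjointness of parallel components, which is exactly the EP-structure underlying the nested intersection property recalled earlier; this is what makes the deviation cost collapse to $c^*$ and triggers the squeeze. On SP-graphs that are not EP one would expect shared arcs across parallel branches, which would break this tight collapse, explaining why the theorem is restricted to EP-graphs. Everything else is routine, modulo verifying that the reduced game inherits the EP-structure and monotone delays, which it does trivially.
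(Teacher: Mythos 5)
Your proof is correct, and its overall skeleton matches the paper's: both arguments reduce the theorem to the claim that in any Nash equilibrium some player's path is a cheapest path of the one-player game, and then peel that player off, pass to the induced subgame with shifted delays, and recurse on $n$. Where you genuinely diverge is in how that key claim is established. The paper proves it by contradiction using the \emph{nested intersections property} of the action set: assuming no player uses a cheapest single-player path $P$, it picks the player $j$ maximizing $|P\cap A_j|$, deduces from nested intersections that $P\setminus A_j$ carries zero congestion, and exhibits a profitable deviation of $j$ to $P$. You instead prove it by structural induction on the EP-decomposition, with the parallel case split according to whether the branch containing the cheapest path is used; your ``empty branch'' subcase replaces the paper's maximal-overlap argument by a squeeze $c^*\le c_i(A)\le c^*$. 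Your route is more self-contained (it never invokes the nested intersections lemma) at the cost of being longer; the paper's is shorter but leans on the combinatorial characterization of EP action sets. One small correction to your closing remark: parallel components are arc-disjoint in series-parallel graphs as well, so that is not what fails beyond EP. What breaks your induction on general SP-graphs is the series composition of two \emph{non-trivial} graphs, where the inductive step produces a player playing a cheapest segment in each factor but not necessarily the \emph{same} player in both --- consistent with Proposition~\ref{prop:brpnotspo} and with the fact that on SP-graphs only one inclusion of the theorem survives.
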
 
\begin{proof}
It remains to show that each Nash equilibrium of $\Ga$ is a permutation of a 1-lookahead outcome corresponding to the identity.

Consider an NE $A$ with corresponding congestion vector $x$. Let $S_1$ be the set of actions costing the least for the first player, that is, those minimising $\sum_{r\in P}d_r(1)$. This corresponds to the set of NE of the 1-player game. Assume towards contradiction no one plays an action from $S_1$. Let $P\in S_1$. Since $A$ is an NE, $P$ has become more expensive as more players joined in, so some $A_j$ overlaps with it. Pick $j$ with $|P\cap A_j|$ maximal. Then by the nested intersection property, no $A_k$ intersects with $P\setminus A_j$. Hence $x_r=0$ for any $r\in P\setminus A_j$ (these are not chosen). We find
$$
\sum_{r\in P\setminus A_j }d_r(x_r+1)=\sum_{r\in P\setminus A_j }d_r(1)<\sum_{r\in A_j \setminus P }d_r(1) \leq \sum_{r\in A_j \setminus P }d_r(x_r)
$$
using that $A_j\not\in S_1$ and $P\in S_1$. This contradicts the fact that $A$ is an NE. 

Let $j$ be a player picking an action from $S_1$, so that $(A_j)$ forms a 1-lookahead outcome. Define $\sigma(j)=1$. Suppose we have defined $\sigma$ so that $A'= (A_{\sigma^{-1}(1)},\dots,A_{\sigma^{-1}(i)})$ forms a 1-lookahead outcome for some $1\leq i<n$. The profile 
$$
A''=A\setminus A'= (A_{j})_{j\in \sigma^{-1}\{i+1,\dots,n\}}
$$
forms an NE in the game $\Ga'$ induced by $A'$. Repeat the argument above for $A''$ and $\Ga'$ to define $\sigma^{-1}(i+1)$. 
\end{proof}

\subsection{Stability and inefficiency of generic games}

As shown in the introduction, SPOs are not guaranteed to be stable for SNCGs on EP-graphs. However, stability is guaranteed if the game is generic. 
\begin{theorem}
\label{thm:spoep}
For every generic SNCG on an EP-graph, the set of subgame-perfect outcomes coincides with the set of Nash equilibria.
\end{theorem}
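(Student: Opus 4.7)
The plan is to establish $\SPO(\Ga) = \NE(\Ga)$ by an orbit reduction combined with a one-directional induction on the number of players $n$. The key inputs are (i)~the symmetry theorem at the end of Section~\ref{sec:klo}, (ii)~Theorem~\ref{thm:NEisBRP} identifying NEs with 1-lookahead outcomes on EP-graphs, and (iii)~the nested intersection property of EP-graphs.

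I first observe that in a generic symmetric game each of $\SPO(\Ga)$ and $\NE(\Ga)$ is a single orbit under player permutation. On the SPO side, for each fixed order the SPO is unique by genericness (an observation already made in Section~\ref{sec:klo}), and the symmetry theorem applied with $k=n$ says that permuting the SPO w.r.t.\ $\Id_{[n]}$ produces the SPO for the permuted order; hence every SPO is a permutation of the unique SPO w.r.t.\ $\Id_{[n]}$. On the NE side, Theorem~\ref{thm:NEisBRP} gives $\NE(\Ga) = 1\text{-}\mathbb{LO}(\Ga)$, and in the generic setting the 1-lookahead outcome for any fixed order is uniquely determined (each player's myopic best response is unique in the induced subgame); applying the symmetry theorem again, all NEs form a single orbit under permutation. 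Since both sets are single orbits, to conclude $\SPO(\Ga) = \NE(\Ga)$ it suffices to show that the unique SPO $A$ w.r.t.\ $\Id_{[n]}$ is also a Nash equilibrium.

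To prove $A \in \NE(\Ga)$ I would induct on $n$. The base case $n = 1$ is trivial. For the inductive step, the subgame $\Ga'$ induced by $A_1$ is again a generic SNCG on an EP-subgraph (genericness and the EP property are inherited under inducing), and $(A_2, \ldots, A_n)$ is its unique SPO; by the induction hypothesis this tail is an NE of $\Ga'$, so no player $i \ge 2$ has a profitable unilateral deviation in $\Ga$. It therefore suffices to rule out deviations by player~$1$. Assume for contradiction $c_1(B_1, A_{-1}) < c_1(A)$ for some $B_1 \in \A \setminus \{A_1\}$. The subgame $\Ga''$ induced by $B_1$ is likewise generic and EP, so by the induction hypothesis it has a unique NE $=$ SPO $C_{-1}$. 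SPO-optimality of $A_1$ in the full sequential-move game gives $c_1(A) \le c_1(B_1, C_{-1})$, so combining the two inequalities yields
\[
\sum_{r \in B_1} d_r\bigl(x(B_1, A_{-1})_r\bigr) \;<\; \sum_{r \in B_1} d_r\bigl(x(B_1, C_{-1})_r\bigr).
\]
Since $d_r$ is non-decreasing, there must exist an edge $r \in B_1$ with $x(A_{-1})_r < x(C_{-1})_r$; intuitively, moving player~$1$ off $B_1$ causes strictly more of players $2, \ldots, n$ to use $r$, which should violate a monotonicity principle.

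The main obstacle I foresee is converting this edge-level inequality into an actual NE violation. The natural tool is to pick a player $j \ge 2$ with $r \in C_j \setminus A_j$ and apply the nested intersection lemma (in particular the absence of bad configurations) to the quadruple $A_1, B_1, A_j, C_j$, in order to exhibit either a profitable deviation from $A_{-1}$ in $\Ga'$ (contradicting the NE property of the tail of $A$) or a profitable deviation from $C_{-1}$ in $\Ga''$ (contradicting its uniqueness as an NE). Genericness is needed here to upgrade weak into strict preferences for the candidate deviator. This delicate case analysis on how $B_1 \setminus A_1$ is positioned in the EP-decomposition, in the same spirit as the argument used in the proof of Theorem~\ref{thm:NEisBRP}, is where I expect the bulk of the technical work to lie.
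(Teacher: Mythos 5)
Your skeleton matches the paper's: the reduction to single permutation orbits (unique NE and unique SPO up to permutation, via genericity, Theorem~\ref{thm:NEisBRP}, and the symmetry theorem), the induction on $n$, the observation that the tail $(A_2,\dots,A_n)$ is an NE of the induced subgame so only a deviation by player $1$ must be excluded, and the subgame-perfection bound $c_1(A)\le c_1(B_1,C_{-1})$. But the decisive step is missing, and you flag it yourself: you reduce to the existence of an edge $r\in B_1$ with $x(A_{-1})_r<x(C_{-1})_r$ and then defer ``the bulk of the technical work.'' This is a genuine gap, not a routine verification. The profiles $A_{-1}$ and $C_{-1}$ are equilibria of \emph{different} subgames (induced by $A_1$ and by $B_1$, respectively), and neither $(B_1,A_{-1})$ nor $(B_1,C_{-1})$ is known to be an equilibrium of $\Ga$; an edge on which their congestions differ therefore contradicts nothing by itself, and it is far from clear that chasing the quadruple $A_1,B_1,A_j,C_j$ through the no-bad-configuration lemma yields a profitable unilateral deviation for any single player.

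The paper closes the argument with one external input you do not invoke: Lemma 1 of Fotakis \cite{fotakis}, which says that on these networks a best-response move by one player leaves every other player who was best-responding still best-responding. Taking $P$ to be player $1$'s \emph{best} response to $A_{-1}$ (strictly better than $A_1$, since $A$ is not an NE while all of $A_{-1}$ is best-responding), that lemma makes $(P,A_{-1})$ a Nash equilibrium outright, hence a permutation of the unique NE $B$; so $P=B_i$ for some $i$ and $A_{-1}$ is a permutation of $B_{-i}$, giving $c_1(B_i,B_{-i})=c_1(P,A_{-1})<c_1(A)\le c_1(B_i,B_{-i})$, where the last inequality is your subgame-perfection bound together with the fact (needed and proved in the paper from the induction hypothesis) that the SPE continuation after $B_i$ produces a permutation of $B$. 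If you wish to avoid citing Fotakis's lemma, you must reprove this preservation property for EP-graphs --- which is precisely the combinatorial work your sketch leaves open.
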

\begin{proof}
Because the game is generic, there is a unique 1-lookahead outcome (up to permutation) and hence a unique NE: The game $\Ga^1$ is generic and thus there is a unique cheapest first path $B_1$. The game $\Ga^2$ is generic and hence so is the subgame of $\Ga^2$ induced by $(B_1)$. Thus there is a unique cheapest second path, and so on. 

We prove the statement by induction on the number of players $n$. The claim is true for $n=1$. Suppose the claim holds for all generic SNCGs on EP-graphs with less than $n$ players and let $\Ga$ be an SNCG on an EP-graph with $n$ players. 
Let $A$ be an SPO and $S$ the corresponding SPE, corresponding to some order $\sigma$ which we may assume to be the identity by relabeling the players. 
Given an action $P$ of player 1, $S$ prescribes an SPE $S'$ in the game induced by $(P)$. By our induction hypothesis (the subgame induced by $(P)$ has $n-1$ players), the SPO corresponding to $S'$ is a permutation of a 1-lookahead outcome in this subgame. In particular, if $B$ is the unique 1-lookahead outcome of the game $\Ga$ and player 1 plays $B_i$, then the resulting outcome according to $S$ is a permutation of $B$. 

Suppose towards contradiction that  $A_1\neq B_i$ for all $i$.
By induction $A_{-1}$ is a NE in the subgame induced by $(A_1)$ (by the same argument as before). Since $A$ is not a Nash equilibrium and all players except 1 are playing a best response, there is an action $P$ so that
$c_1(P,A_{-1})<c_1(A)$.
If player 1 switches to $P$, then the other players are still playing a best response by 
\cite[Lemma 1]{fotakis}. So $(P,A_{-1})$ is an NE, hence a permutation of $B$. This means that $P=B_i$ for some $i$ (and $A_{-1}$ some permutation of $B_{-i}$), which yields a contradiction: $c_1(B_i,B_{-i})=c_1(P,A_{-1})<c_1(A_1,A_{-1})\leq c_1(B_i,B_{-i})$, where the last inequality follows because $A_1$ is subgame-perfect for player 1.
\end{proof}

\begin{theorem}
\label{thm:corNE}
For every SNCG on an EP-graph, each Nash equilibrium is a subgame-perfect outcome. 
\end{theorem}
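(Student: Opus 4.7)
My plan is to proceed by induction on the number of players $n$, closely mirroring the structure of the proof of Theorem~\ref{thm:spoep}. Given any NE $A$, Theorem~\ref{thm:NEisBRP} yields an order $\sigma$ under which $A$ is a 1-lookahead outcome; by the permutation-invariance of $k$-lookahead outcomes for symmetric games (the theorem at the end of Section~\ref{sec:klo}) together with a relabeling of the players, I may assume $\sigma=\Id_{[n]}$. The goal is then to exhibit an SPE of $\Ga^{\Id_{[n]}}$ whose equilibrium-path outcome is exactly $A$, by specifying the off-path continuations carefully so that player~$1$ has no profitable first-move deviation.

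The base case $n=1$ is immediate. For the inductive step I define the SPE strategy $S$ recursively. On path, player~$1$ plays $A_1$. For each possible action $P\in\A$ of player~$1$ (including $P=A_1$), I need to specify an SPE of the subgame $\Ga'_P$ induced by $(P)$, which is itself an SNCG with $n-1$ players on the same EP-graph (the action set is unchanged and only the delay functions are shifted by the congestion contributed by $P$). If $P=A_1$, then $A_{-1}$ is an NE of $\Ga'_{A_1}$ since $A$ is an NE of $\Ga$; by the induction hypothesis $A_{-1}$ is an SPO of $\Ga'_{A_1}$, and I pick a witnessing SPE. If $P\neq A_1$, then Lemma~1 of \cite{fotakis} (the persistency-of-equilibrium fact invoked in the proof of Theorem~\ref{thm:spoep}) implies that $A_{-1}$ is still an NE of $\Ga'_P$, so by induction $A_{-1}$ is an SPO of $\Ga'_P$; again pick a witnessing SPE.

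Subgame-perfection of $S$ below the root is then automatic, since each off-path continuation is itself an SPE by construction. The only remaining check is that $A_1$ is optimal for player~$1$ on path: by the construction, whatever action $P$ player~$1$ picks, the continuation outcome is $A_{-1}$, so player~$1$'s cost is $c_1(P,A_{-1})$. Since $A$ is an NE, $c_1(A_1,A_{-1}) \le c_1(P,A_{-1})$ for every $P\in\A$, so $A_1$ is indeed optimal and the outcome of $S$ is $A$, proving that $A$ is an SPO.

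I expect the main (mild) subtlety to be bookkeeping: verifying that each induced subgame $\Ga'_P$ is still an SNCG on the same EP-graph so that the induction hypothesis genuinely applies, and articulating the persistency lemma precisely so that it really delivers $A_{-1}$ as an NE of $\Ga'_P$ rather than merely as a profile in which each remaining player best-responds. Both points are routine, and beyond them the two real ingredients --- Theorem~\ref{thm:NEisBRP} and Lemma~1 of \cite{fotakis} --- yield the result via this clean recursive construction.
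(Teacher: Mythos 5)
Your construction breaks at the off-path continuations. For a deviation $P\neq A_1$ you claim that Lemma~1 of \cite{fotakis} yields that $A_{-1}$ is still a Nash equilibrium of the subgame $\Ga'_P$ induced by $(P)$, so that by induction you can install an SPE of $\Ga'_P$ with outcome $A_{-1}$. But that lemma is a statement about \emph{best-response} moves: when a player deviates to a best response, every other player who was best-responding remains so. It says nothing about arbitrary (in particular, cost-increasing) deviations, and the persistence claim is false for those. Concretely, take the EP-graph consisting of two parallel arcs $r,s$ with $d_r(1)=1$, $d_r(2)=10$, $d_s(1)=2$, $d_s(2)=3$ and two players. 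Then $A=(r,s)$ is a Nash equilibrium, but in the subgame induced by the deviation $P=s$ the remaining player pays $d_s(2)=3$ on $s$ versus $d_r(1)=1$ on $r$, so $A_{-1}=(s)$ is not an NE (let alone an SPO) of $\Ga'_s$; the unique subgame-perfect continuation after $P=s$ is $(r)$, not $A_{-1}$. Hence you cannot force the continuation after every deviation to be $A_{-1}$, and the step ``player 1's deviation cost is $c_1(P,A_{-1})$'' collapses. (In the proof of Theorem~\ref{thm:spoep} the lemma is invoked only for a profitable, hence best-response, deviation, which is why it is legitimate there.)

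To repair the argument along these lines you would have to show that for every deviation $P$ and some subgame-perfect continuation $B^P$ of $\Ga'_P$ one has $c_1(P,B^P)\geq c_1(A_1,A_{-1})$, which is a genuinely different and harder claim. The paper sidesteps this entirely: it perturbs the delay functions into a \emph{close} generic game that still has $A$ as an NE (possible because on an EP-graph every path owns a private resource, by Milchtaich), applies Theorem~\ref{thm:spoep} to conclude that $A$ is the unique SPO of the perturbed game, and then uses the closeness condition to transfer the SPE back to the original game. Your opening reduction (using Theorem~\ref{thm:NEisBRP} and permutation-invariance to fix the order) is fine, but it is not where the difficulty lies.
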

\begin{proof}[Sketch]
Call game $\Ga'$ (with delay functions $d'_r$) \textit{close} to $\Ga$ if for all paths $P\neq Q$ and congestion vectors $(x_r)$ and $(y_r)$
$$
\sum_{r\in P} d_r(x_r) < \sum_{r\in Q} d_r(x_r) \implies \sum_{r\in P} d'_r(x_r) \leq \sum_{r\in Q} d'_r(x_r).
$$
Suppose $\Ga$ is given together with a Nash equilibrium $A$. 
If $\Ga'$ is a generic game close to $\Ga$ which also has $A$ has Nash equilibrium, then by Theorem \ref{thm:spoep} this is the unique subgame-perfect outcome of $\Ga'$. The close condition above (invented by Milchtaich \cite{crowdgames}) exactly ensures $A$ is then also a subgame-perfect outcome in $\Ga$. It remains to find a close generic game, which can be done by adjusting the delay functions of $\Ga$; each path contains a resource he shares with no other path \cite{milnetw}, so we can increase the cost of individual paths without effecting the other path costs. 
\end{proof}

The following theorem is the main result of this section. 

\begin{theorem}
\label{thm:kLOforSNCG}
Let $G$ be a generic SNCG on an EP-graph. Then for every $k$ the set of $k$-lookahead outcomes coincides with the set of Nash equilibria.
As a consequence, $k$-$\LPoA(\Ga)=\PoA(\Ga)$. 
\end{theorem}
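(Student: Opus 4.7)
The plan is to induct on the number of players $n$, leveraging Theorems \ref{thm:NEisBRP} and \ref{thm:spoep} applied to the relevant subgames so that the level of lookahead $k$ is effectively short-circuited. By the permutation theorem for symmetric games at the end of Section \ref{sec:klo}, it suffices to show that every $k$-LO with respect to the identity coincides with the unique NE of $\Ga$. Genericity (together with Theorems \ref{thm:NEisBRP} and \ref{thm:spoep}) guarantees that the sets $\NE(\Ga)$, $\SPO(\Ga)$, and the set of 1-LOs each collapse to a single outcome $B = (B_1, \dots, B_n)$ up to permutation; moreover, by the greedy characterization of 1-LOs, $B_1 = \argmin_{P \in \A}\sum_{r\in P}d_r(1)$ is player 1's unique cheapest solo path.

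The key step is to pin down player 1's action in an arbitrary $k$-LO $A$. By definition, $A_1$ equals the first player's action in some SPO of $\Ga^k$. The crucial observation is that $\Ga^k$ is itself a generic SNCG on the same EP-graph (with $\min\{k,n\}$ players), so Theorem \ref{thm:spoep} applied to $\Ga^k$ identifies its SPO with its NE, and Theorem \ref{thm:NEisBRP} applied to $\Ga^k$ further identifies that NE with its unique 1-LO. Since the first player in a 1-LO of $\Ga^k$ plays the unique cheapest solo path, which is the same $B_1$, we obtain $A_1 = B_1$ independently of $k$.

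With $A_1 = B_1$ fixed, the induction closes: $A_{-1}$ is a $k$-LO in the subgame $\Ga'$ induced by $(B_1)$, again a generic SNCG on an EP-graph but with $n-1$ players, so by the inductive hypothesis $A_{-1}$ equals the unique NE $(B_2, \dots, B_n)$ of $\Ga'$. Hence $A = B$. The reverse inclusion (every NE is a $k$-LO) and the statement for all orders follow from the permutation theorem in Section \ref{sec:klo}: any permutation of $B$ is both an NE and a $k$-LO. The consequence $k\text{-}\LPoA(\Ga) = \PoA(\Ga)$ is then immediate from the definition in \eqref{eq:LPoA}, since the maxima are taken over identical sets.

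The main obstacle I expect is not the induction itself but a careful bookkeeping argument verifying that the subgame structure is preserved throughout the recursion, i.e., that both $\Ga^k$ and the subgame $\Ga'$ induced by $(B_1)$ are generic SNCGs on an EP-graph with the right number of players, so that Theorems \ref{thm:NEisBRP} and \ref{thm:spoep} genuinely apply to them. This should be routine from the definitions (restricting the player set leaves $\A$, the delay functions, and the underlying graph unchanged, and genericity is inherited), but the entire chain of equalities $\SPO(\Ga^k) = \NE(\Ga^k) = 1\text{-}\mathbb{LO}(\Ga^k)$—which is what collapses the dependence on $k$—hinges on these reductions holding uniformly at every level of the induction.
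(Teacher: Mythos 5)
Your overall architecture matches the paper's (reduce to the identity order via the permutation theorem, collapse $\SPO(\Ga^k)=\NE(\Ga^k)=1\text{-}\mathbb{LO}(\Ga^k)$ using Theorems~\ref{thm:spoep} and~\ref{thm:NEisBRP}, then recurse into the subgame induced by player 1's action), but your key step is false. You claim that because an SPO of $\Ga^k$ is identified with its unique 1-lookahead outcome, the first player in that SPO plays the cheapest solo path $B_1$, hence $A_1=B_1$ ``independently of $k$.'' The identification in Theorems~\ref{thm:spoep} and~\ref{thm:NEisBRP} is an equality of \emph{sets}, and both sets are closed under permutation; genericity gives uniqueness only \emph{up to permutation}. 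So the SPO of $\Ga^k$ with respect to the identity is some permutation of $(B_1,\dots,B_k)$, and all you may conclude is $A_1\in\{B_1,\dots,B_k\}$. The paper's own example refutes your stronger claim: for the singleton game with $d_r(x)=x$, $d_s(x)=x+0.5$ and $n$ odd, the 1-lookahead outcome with respect to the identity starts with $r$ (the cheapest solo path), while the subgame-perfect outcome with respect to the identity is $(s,\dots,s,r,\dots,r)$ and starts with $s$. Hence for $k=n$ your deduction $A_1=B_1$ fails, and with it the claim $A=B$ that drives your induction.

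The gap is repairable, and the repair is essentially the paper's proof: one must carry the weaker information $A_1\in\{B_1,\dots,B_k\}$ forward as a multiset bookkeeping argument, showing that the subgame of $\Ga^{k+1}$ induced by $(A_1)$ has $(B_1,\dots,B_{k+1})\setminus(A_1)$ (as a multiset) as its unique 1-lookahead outcome up to permutation, whence $A_i\in(B_1,\dots,B_{k+i-1})\setminus(A_1,\dots,A_{i-1})$ for $i\le n-k$ and $A_i\in B\setminus(A_1,\dots,A_{i-1})$ thereafter, so that $A$ is a \emph{permutation} of $B$ and therefore still a Nash equilibrium. Your secondary concerns (that $\Ga^k$ and the induced subgames remain generic SNCGs on the same EP-graph) are indeed routine, as you suspected; the real issue is the conflation of ``unique'' with ``unique up to permutation.''
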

\begin{proof}
Since there is a unique Nash equilibrium and $k$-lookahead outcome up to permutation, it suffices to show that each $k$-lookahead outcome is a Nash equilibrium. By relabelling the players, we can assume that the order of the players is the identity.
Let $B$ denote the unique 1-lookahead outcome of $\Ga$ corresponding to the identity. Then $\Ga^k$ has unique 1-lookahead outcome $(B_1,\dots,B_k)$.

Let $A$ be a $k$-lookahead outcome. Since each SPO of $\Ga^k$ is a permutation of a 1-lookahead outcome, we know that $A_1\in \{B_1,\dots,B_k\}$. This implies that the subgame of $\Ga^{k+1}$ induced by $(A_1)$ has some permutation of $(B_1,\dots,B_{k+1})\setminus (A_1)$ as unique 1-lookahead outcome (where we perform the set operations seeing the tuple as a multiset). This means that $A_2\in (B_1,\dots,B_{k+1})\setminus (A_1)$. Continuing this way, we see that
$A_i \in (B_1,\dots,B_{k+i-1})\setminus (A_1,\dots,A_{i-1})$ 
for $i=1,\dots, n-k$ and for $i>n-k$ we find
$A_i \in B\setminus (A_1,\dots,A_{i-1})$. Hence $A$ will be a permutation of $B$.
\end{proof}

The result of Theorem~\ref{thm:kLOforSNCG} does not extend to series-parallel graphs.
\begin{proposition}
\label{prop:brpnotspo}
For any SP-graph $\Gr$ that is not EP, there is a generic SNCG $\Ga$ on $\Gr$ such that the sets of 1-lookahead and $n$-lookahead outcomes are disjoint.
\end{proposition}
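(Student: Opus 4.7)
The plan is to first construct an explicit generic SNCG on the minimal SP-not-EP graph---the \emph{diamond} $\Gr_0$ consisting of two parallel arcs in series with two parallel arcs, whose four directed $o$--$d$ paths form the canonical bad configuration---where the 1-lookahead and $n$-lookahead outcomes disagree, and then to lift this example to an arbitrary SP-not-EP graph $\Gr$ by embedding. The structural fact I would exploit for the lifting is that, by the characterisation of Milchtaich, $\Gr$ being SP but not EP forces its SP-decomposition to contain a series node both of whose children are non-trivial; picking two internally vertex-disjoint sub-paths from each child and concatenating them through that series node yields four paths $Q_{11}, Q_{12}, Q_{21}, Q_{22}$ in $\Gr$ that play the role of the diamond paths.

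For the embedding step, I would copy the base-case delay assignment onto the arcs lying on the four sub-diamond paths and give every remaining arc of $\Gr$ a delay so large (strictly larger than, say, $n$ times the maximum sub-diamond delay at congestion $n$) that no player in any $1$-LO or SPO can benefit from using a non-sub-diamond arc. The game then effectively reduces to the base case on $\Gr_0$, and disjointness of the 1-lookahead and $n$-lookahead sets transfers to $\Gr$.

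For the base case itself, I would choose a generic SNCG on $\Gr_0$ with $n \geq 3$ players and delays engineered so that: (i)~when player~1 plays the greedy choice $P^*$ minimising $\sum_{r \in P} d_r(1)$, the subsequent players' subgame-perfect play congests $P^*$ heavily (for instance by copying $P^*$ or piling onto its resources), driving player~1's realised cost well above $\sum_{r \in P^*} d_r(1)$; while (ii)~if player~1 instead plays a strategically chosen alternative $P'$, the others' subgame-perfect continuation is deflected to paths that share little with $P'$, so that player~1's realised cost on $P'$ is strictly smaller than on $P^*$. The non-nested intersection structure of the diamond is essential: in an EP graph the best-response monotonicity lemma used in the proof of Theorem~\ref{thm:spoep} forces the 1-LO and SPO branches to agree, whereas in the diamond this fails precisely because of the bad configuration.

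The main obstacle is writing down the explicit base-case delays. In two-player attempts on $\Gr_0$ the SPO tends to coincide with the $1$-LO---player~2's deflection alone cannot overcome the intrinsic cheapness of $P^*$ for a solo player~1---so the construction genuinely requires $n \geq 3$, where anticipating two subsequent subgame-perfect moves gives player~1 enough leverage to strictly prefer a non-greedy path. Simultaneously arranging the non-decreasing monotonicity of the delays, the genericity condition, and the required strict separation of the 1-LO from the SPO is a finite but delicate case analysis that I expect to dominate the technical work.
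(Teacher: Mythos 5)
Your high-level strategy is exactly the paper's: reduce to the four-arc ``diamond'' network (arcs $r,s$ in parallel, in series with $t,u$ in parallel, giving paths $rt,st,ru,su$), which by Milchtaich's characterisation occurs as a minor in every SP-graph that is not EP, and then exhibit an explicit generic game on it whose 1-lookahead and subgame-perfect outcomes differ. Your observation that two players do not suffice and that the leverage comes from anticipating at least two followers is also consistent with the paper, which uses three players.

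However, there is a genuine gap: the explicit base-case construction, which you defer as ``the main obstacle'' and ``a finite but delicate case analysis'', is the entire content of the proof --- without it nothing has been established. For the record, the paper settles it with the three-player game on the diamond given by $(d_r(1),d_r(2),d_r(3))=(1,3,100)$, $(d_s(1),d_s(2),d_s(3))=(2,4,200)$, $(d_t(1),d_t(2),d_t(3))=(1.1,4.1,100.1)$, $(d_u(1),d_u(2),d_u(3))=(2.2,3.2,100.2)$: the unique 1-lookahead outcome (up to permutation) is $(rt,su,ru)$, in which the followers congest $r$ and leave player~1 with cost $4.1$, while the unique $n$-lookahead outcome is $(st,ru,ru)$, giving player~1 cost $3.1$ --- precisely the ``deflection'' mechanism you describe, realised concretely. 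A secondary, smaller issue is the lifting step: your plan to put ``the base-case delay assignment onto the arcs lying on the four sub-diamond paths'' glosses over the fact that each diamond arc corresponds to a whole segment of arcs of $\Gr$ (the minor involves contractions), so the base-case delay of each diamond arc must be concentrated on one arc of its segment with the remaining arcs given negligible (then perturbed-for-genericity) delays, while all arcs off the four chosen paths are made prohibitively expensive. This is routine but should be said; the paper itself is terse here, asserting only that the minor relation makes it suffice to treat the diamond.
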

\begin{proof}
Consider the single-commodity network with four arcs $r,s,t,u$ and $od$-paths $rt$, $st$, $ru$ and $su$. Each series-parallel graph that is not extension-parallel has this network as a minor (see \cite{milnetw}). Thus it suffices to give a counterexample on this network. 
Consider the generic three-player game with delay functions 
\[
\begin{array}{l}
(d_r(1),d_r(2),d_r(3)) = (1,3,100)\\ (d_s(1),d_s(2),d_s(3)) = (2,4,200) \\
(d_t(1),d_t(2),d_t(3)) = (1.1,4.1,100.1)\\ (d_u(1),d_u(2),d_u(3)) = (2.2,3.2,100.2).
\end{array}
\]
The unique 1-lookahead outcome (up to permutation) is $(rt,su,ru)$ and the unique $n$-lookahead outcome (up to permutation) is $(st,ru,ru)$. 
\end{proof}

We next show that anticipation may still be beneficial for the first player.
\begin{theorem}
Let $\Ga$ be a generic SNCG on an EP-graph. Let $B$ be a subgame-perfect outcome with respect to the identity. Then $c_1(B)\leq\dots\leq c_n(B)$. In particular, $c_1(B)\leq c_1(A)$ for any $k$-lookahead outcome $A$ with respect to the identity. 
\end{theorem}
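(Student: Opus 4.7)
The plan is to leverage the uniqueness of equilibrium already established. By Theorems \ref{thm:spoep} and \ref{thm:kLOforSNCG}, any generic SNCG on an EP-graph has a unique Nash equilibrium $B^*$ up to permutation, and this NE coincides (up to permutation) with the unique subgame-perfect outcome and with the unique $k$-lookahead outcome. In particular, both the SPO $B$ and every $k$-lookahead outcome $A$ are permutations of $B^*$; they share the congestion vector $x(B^*)$, and each player's cost is determined solely by her chosen path together with $x(B^*)$.

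The core observation I would prove first is: whenever player $1$ plays $B^*_j$ (for any $j$) at the root and the remaining players then play subgame-perfectly, the resulting full profile is again a permutation of $B^*$. To see this, note that any permutation of an NE of $\Ga$ is still an NE, so the profile obtained from $B^*$ by placing $B^*_j$ in player $1$'s slot is an NE of $\Ga$; its restriction to players $2,\dots,n$ is therefore an NE of the subgame induced by $(B^*_j)$. Since this subgame is itself a generic SNCG on the same EP-graph, the uniqueness statements in Theorems \ref{thm:spoep} and \ref{thm:kLOforSNCG} apply and force the subgame-perfect continuation to be a permutation of $B^*\setminus\{B^*_j\}$ (as multisets). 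Consequently, player $1$'s anticipated cost after deviating to $B^*_j$ equals $d_{B^*_j}(x(B^*))$.

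Player $1$'s SPO optimality at the root then yields $c_1(B) \leq d_{B^*_j}(x(B^*))$ for every $j$. Because $B$ is a permutation of $B^*$, the multiset of costs $\{c_i(B): i=1,\dots,n\}$ equals $\{d_{B^*_j}(x(B^*)): j=1,\dots,n\}$, so $c_1(B) \leq c_i(B)$ for all $i$. The remaining chain $c_2(B) \leq c_3(B) \leq \dots \leq c_n(B)$ then follows by induction on $n$, applied to the subgame induced by $(B_1)$: this subgame is a generic SNCG on the same EP-graph with $n-1$ players and shifted delays, and its SPO with respect to the identity is $(B_2,\dots,B_n)$; costs are preserved, so the inductive hypothesis delivers the chain.

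The final sentence of the theorem is then immediate: by Theorem~\ref{thm:kLOforSNCG}, any $k$-lookahead outcome $A$ with respect to the identity has $A_1 = B^*_j$ for some $j$ and $x(A)=x(B^*)$, whence $c_1(A) = d_{B^*_j}(x(B^*)) \geq c_1(B)$ by what was just proved. The only mildly delicate step is the middle paragraph, namely verifying that player $1$'s anticipated cost after deviating to an NE-action is computed against the congestion vector of the original NE; once this is in hand, the rest of the argument is essentially symbolic.
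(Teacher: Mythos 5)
Your proof is correct and follows essentially the same route as the paper's: both rest on the uniqueness (up to permutation) of the Nash equilibrium from Theorems~\ref{thm:spoep} and~\ref{thm:kLOforSNCG}, the key observation that a deviation by player 1 to any NE-action is completed by the subgame-perfect continuation to a permutation of that NE (so the anticipated cost is the NE cost of that action), and subgame-perfect optimality at the root. Your version merely makes the justification of that key observation more explicit and organizes the chain via induction on the subgame induced by $(B_1)$ rather than identifying $B$ directly with the cost-sorted permutation of the NE, but the substance is identical.
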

\begin{proof}
Let $A=(A_1,\dots,A_n)$ be a permutation of the unique NE of $\Ga$ for which $c_1(A)\leq \dots \leq c_n(A)$. Let $B$ be the unique subgame-perfect outcome with respect to the identity; then this is some permutation of $A$ by Theorem~\ref{thm:kLOforSNCG}.

If player 1 plays $A_1$, then his successors will play $A_2,\dots,A_n$ (not necessarily in that order), so that $c_1(B)\leq c_1(A)$. Since $B$ is some permutation of $A$ and since $A_1$ (which may equal $A_i$ for some $i$) is the unique element from $\A$ with the lowest cost in the profile $A$, we find $A_1=B_1$ and $c_1(A)=c_1(B)$.
Similarly, player $j$ can ensure himself the cost $c_j(A)$ in the subgame induced by $(A_1,\dots,A_{j-1})$ and therefore $c_j(B)=c_j(A)$. This proves $c_1(B)=c_1(A)\leq\dots\leq c_n(A)=c_n(B)$

Finally, note that if $C$ is a $k$-lookahead outcome, then $C$ is some permutation of $A$ and therefore $c_1(C)=c_j(A)=c_j(B)\geq c_1(B)$ for some $j\in \N$.
\end{proof}


The following example shows that the cost of the first players does not decrease monotonically with his lookahead. In fact, a generalization of this example shows that only full lookahead guarantees the smallest cost for player 1. 

\begin{example}
Consider the generic symmetric singleton congestion game with $R=\{r,s\}$, $d_r(x)=x$ and $d_s(x)=x+0.5$. The subgame-perfect outcome with respect to the identity is $(r,r,\dots r, s,\dots,s)$ if $n$ is even and $(s,s,\dots,s,r,\dots,r)$ if $n$ is odd. Thus, if $A$ is a $(n-2)$-lookahead outcome and $B$ a $(n-1)$-lookahead outcome, then $c_1(A)=\min_{i\in \N}c_i(A)<\max_{i\in \N}c_i(A)=c_1(B)$. 
\end{example}

In contrast to the above, the first player is not guaranteed to achieve minimum cost with full lookahead if the game is non-generic. 

\begin{example}
Suppose a symmetric singleton congestion game with delay functions $d_r(x)=x=d_s(x)$ is played by an odd number of players. The successors of player 1 can decide which resource becomes the most expensive one and enforce that this is the one that player 1 picks, so that player 1 is always worse off.
\end{example}

\subsection{Inefficiency of non-generic games}


Let $\Ga$ be an SNCG and let $A$ be an outcome. Let $(x_r)_{r\in R}$ be the corresponding congestion vector. We define the \emph{opportunity cost} of $A$ in $\Ga$ as the minimal cost that a new player entering the game would have to pay, i.e., $O_A(\Ga)=\min_{P\in \A}\sum_{r\in P}d_r(x_r+1)$; this definition is implicit in \cite{optimalNEmakespan,optimalNE}.
The \textit{worst cost} of $A$ in $\Ga$ is the egalitarian social cost, i.e., $W_A(\Ga)= \max_{i\in\N}\sum_{r\in A_i}d_r(x_r)$. 
\begin{lemma}
\label{lem:oppcosts}
Let $\Ga$ be an SNCG on a SP-graph $\Gr$ with $m$ players and let $\Ga^n$ be the same game with $n\leq m$ players. 
For any action profile $A$ in $\Ga^n$ and any 1-lookahead outcome $B$ of $\Ga$, we have $O_A(\Ga^n)\leq O_B(\Ga)$. Further, if $n<m$, then $O_A(\Ga^n)\leq W_B(\Ga)$.
\end{lemma}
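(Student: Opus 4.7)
My plan is to reduce both inequalities to a single key claim—that among all $n$-player profiles of $\Ga^n$, every 1-LO maximises the opportunity cost—and combine it with a monotonicity property along the greedy sequence. Let $\sigma$ be the order in which the greedy players act when producing $B$, and let $B^{(i)}$ denote the prefix of $B$ consisting of the first $i$ players according to $\sigma$. By the definition of 1-lookahead, each $B^{(i)}$ is itself a 1-LO of $\Ga^i$.

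The monotonicity step is the easy part. The greedy $(i{+}1)$st player only adds congestion, so $x^{B^{(i+1)}}_r \geq x^{B^{(i)}}_r$ for every resource $r$; since each $d_r$ is non-decreasing,
\[
O_{B^{(i)}}(\Ga^i) = \min_{P \in \A} \sum_{r \in P} d_r\bigl(x^{B^{(i)}}_r + 1\bigr) \leq \min_{P \in \A} \sum_{r \in P} d_r\bigl(x^{B^{(i+1)}}_r + 1\bigr) = O_{B^{(i+1)}}(\Ga^{i+1}).
\]
Iterating from $i=n$ up to $i=m$ yields $O_{B^{(n)}}(\Ga^n) \leq O_B(\Ga)$. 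For the second statement (when $n<m$), iterating only up to $i=m-1$ gives $O_{B^{(n)}}(\Ga^n) \leq O_{B^{(m-1)}}(\Ga^{m-1})$, and the right-hand side equals $c_m(B)$: the $m$th greedy player's choice minimises the opportunity-cost expression in $\Ga^{m-1}$ and no later player alters his cost, while trivially $c_m(B) \leq W_B(\Ga)$.

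What remains is the key claim,
\[
O_A(\Ga^n) \leq O_{B^{(n)}}(\Ga^n) \quad \text{for every profile } A \text{ of } \Ga^n,
\]
which gives both inequalities of the lemma after concatenation with the monotonicity above. To prove it I would exploit two earlier facts available on SP-graphs: Fotakis et al.'s theorem \cite{fot06} that every 1-LO is a Nash equilibrium, and the paper's earlier observation that the congestion vector of any 1-LO coincides with that of a global minimiser of Rosenthal's potential $\Phi$. My planned route is induction on the series-parallel decomposition of $\Gr$. The single-arc case is trivial. For a series composition $\Gr = \Gr_1;\Gr_2$, every path factors as a path in $\Gr_1$ followed by a path in $\Gr_2$, a 1-LO projects to a 1-LO on each side, and $O_A$ decomposes additively as $O_{A|_1} + O_{A|_2}$, so the claim reduces to the two smaller SP-graphs by the inductive hypothesis. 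For a parallel composition $\Gr = \Gr_1 \| \Gr_2$ the opportunity cost is the minimum over the two sides, and the claim reduces to showing that the 1-LO's split of players between $\Gr_1$ and $\Gr_2$ is optimal for this minimum.

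The hard part will be the parallel case: it is not immediately clear why no reallocation of players between $\Gr_1$ and $\Gr_2$ can raise the minimum-side opportunity cost above the value achieved by the greedy split. I expect to close this using the potential-minimiser characterisation: shifting a player from the greedy-chosen side to the other can only weakly increase $\Phi$, and this extra potential must be paid for by strictly decreasing the cheaper of the two side-opportunity-costs (since the unused-side's opportunity cost already sets a ceiling via the greedy BR choice). This is the step that genuinely invokes the SP-graph assumption, and where the bulk of the technical work will lie.
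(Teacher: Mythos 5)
Your overall architecture --- a structural induction on the series-parallel decomposition, with the series case splitting additively and the parallel case taking a minimum --- matches the paper's, and your monotonicity observation (opportunity costs of greedy prefixes are non-decreasing in the number of players, together with $O_{B^{(m-1)}}(\Ga^{m-1})=c_m(B)\leq W_B(\Ga)$) is correct and cleanly derives both inequalities of the lemma from your ``key claim''. The problem is that the key claim, as you have isolated it, cannot be proved by the induction you describe. In a parallel composition of $\Gr_1$ and $\Gr_2$, the profile $A$ may put $k$ players on $\Gr_1$ while the 1-lookahead outcome puts $\ell>k$ there; to bound $O_{\pi_1(A)}(\Ga_1^k)$ you must then compare a $k$-player profile against an $\ell$-player 1-lookahead outcome on $\Gr_1$ with $k<\ell$. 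That is precisely the ``$n<m$'' half of the lemma, which your key claim (restricted to equal player counts) deliberately excludes. So the reduction to the $m=n$ case does not survive one level of the recursion: the inductive hypothesis must carry both statements simultaneously. This is exactly how the paper proceeds --- in the unequal-split subcase it invokes the induction hypothesis in the form $O_{\pi_1(A)}(\Ga_1^{k})\leq W_{\pi_1(B)}(\Ga_1^{\ell})\leq W_B(\Ga)\leq O_B(\Ga)$, the last inequality using that $B$ is a Nash equilibrium by Fotakis et al.

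Your proposed escape route for the parallel case --- arguing via the characterisation of 1-lookahead outcomes as global minimisers of Rosenthal's potential --- has two further problems. First, it is circular within this paper: Corollary~\ref{cor:brpsamepotential} (all 1-lookahead outcomes share the same potential value) is proved \emph{from} Lemma~\ref{lem:oppcosts}, and Proposition~\ref{prop:optima} builds on that corollary, so neither is available here. Second, even granting the characterisation, the sketch (``the extra potential must be paid for by strictly decreasing the cheaper of the two side-opportunity-costs'') is not an argument: $\Phi$ aggregates delays over all congestion levels already in use and does not directly control the marginal quantity $\min_{P\in\A}\sum_{r\in P}d_r(x_r+1)$ seen by a hypothetical extra player. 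The fix is to strengthen your key claim to the full two-part statement of the lemma and run the structural induction on that pair, which is what the paper does.
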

The lemma above can be proved by induction on the graph structure.
\begin{corollary}
\label{cor:brpsamepotential}
For SNCGs on SP-graphs, all 1-lookahead outcomes have the same value for the potential function. 
\end{corollary}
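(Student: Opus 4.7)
The plan is to write Rosenthal's potential
$\Phi(A)=\sum_{r\in R}\sum_{k=1}^{x(A)_r}d_r(k)$
as a telescoping sum of opportunity costs along the sequence of partial profiles of a 1-lookahead outcome, and then invoke Lemma~\ref{lem:oppcosts} twice to equate the corresponding terms for any two such outcomes.

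First, using the symmetry theorem for symmetric $k$-lookahead outcomes stated just above (which says any permutation of a $k$-lookahead outcome is again a $k$-lookahead outcome), I would reduce to the case that the two given 1-lookahead outcomes $A$ and $B$ of $\Ga$ both correspond to the identity order; permuting coordinates preserves the congestion vector and hence the value of $\Phi$. Writing $A^{(i)}=(A_1,\dots,A_i)$ for the prefix profile, unfolding the recursive definition of a 1-lookahead outcome shows that $A^{(i)}$ is itself a 1-lookahead outcome of the subgame $\Ga^i$, which is again an SNCG on the same SP-graph. Since the $(i{+}1)$-th player best-responds to the congestion of $A^{(i)}$, his incurred cost equals the opportunity cost:
\[
c_{i+1}(A^{(i+1)}) \;=\; \min_{P\in\A}\sum_{r\in P} d_r\bigl(x(A^{(i)})_r+1\bigr) \;=\; O_{A^{(i)}}(\Ga^i).
\]
The standard identity $\Phi(A^{(i+1)})-\Phi(A^{(i)})=c_{i+1}(A^{(i+1)})$ for Rosenthal's potential then telescopes to
\[
\Phi(A) \;=\; \sum_{i=0}^{n-1} O_{A^{(i)}}(\Ga^i),
\]
and likewise $\Phi(B)=\sum_{i=0}^{n-1} O_{B^{(i)}}(\Ga^i)$.

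Finally, I would apply Lemma~\ref{lem:oppcosts} to each subgame $\Ga^i$ in the role of the ``full'' game (taking $m=n=i$ in the statement). Since both $A^{(i)}$ and $B^{(i)}$ are 1-lookahead outcomes of $\Ga^i$, invoking the lemma once with $A^{(i)}$ as the arbitrary action profile and $B^{(i)}$ as the 1-lookahead outcome, and once with the roles swapped, yields $O_{A^{(i)}}(\Ga^i)=O_{B^{(i)}}(\Ga^i)$ for every $i$. Summing over $i$ then delivers $\Phi(A)=\Phi(B)$. The only mild obstacle is bookkeeping: checking that prefixes of 1-lookahead outcomes are themselves 1-lookahead outcomes of the corresponding subgame, and that the identity-order reduction is legitimate. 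Both points follow immediately from the symmetry theorem and the recursive definition of 1-lookahead outcomes, so no further work is needed beyond the telescoping identity and two invocations of Lemma~\ref{lem:oppcosts}.
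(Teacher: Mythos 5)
Your proof is correct and is essentially the paper's own argument: the paper proves the corollary by induction on $n$ via $\Phi(A)=c_n(A)+\Phi(A')=O_{A'}+\Phi(A')$, which is exactly your telescoping identity unrolled, and it likewise obtains $O_{A'}=O_{B'}$ by the two-sided application of Lemma~\ref{lem:oppcosts} with $m=n$. Your explicit handling of the reduction to the identity order (via the symmetry theorem and the fact that $\Phi$ depends only on the congestion vector) is a detail the paper leaves implicit, but the substance of the two proofs is the same.
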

\begin{proof}
Let $A=(A_1,\dots,A_n)$ and $B=(B_1,\dots,B_n)$ be 1-lookahead outcomes (with respect to the identity). If $n=1$, then since both are chosen greedily we must have $\Phi(A)=\Phi(B)$. For $n>1$, the profiles $A'= (A_1,\dots,A_{n-1})$ and $B'=(B_1,\dots,B_{n-1})$ are also 1-lookahead outcomes, so that by an inductional argument and using that 1-lookahead outcomes have the same opportunity costs (Lemma \ref{lem:oppcosts}), we find
$$
\Phi(A) = c_n(A) + \Phi(A')  = O_{A'} +\Phi(A')= O_{B'}+\Phi(B')= \Phi(B).
$$
\end{proof}
Applying induction on the graph structure again, we are able to derive that each 1-lookahead outcome is a global optimum of Rosenthal's potential function.
\begin{proposition}
\label{prop:optima}
Let $\Ga$ be an SNCG on a series-parallel graph. Let $\mathcal{M}$ denote the set of global minima of the Rosenthal potential function $\Phi(A)=\sum_{r\in R}\sum_{i=1}^{x(A)_r}d_r(i)$. 
\begin{enumerate}
\item For any $A\in \mathcal{M}$, there is a 1-lookahead outcome $B$ with $x(B)=x(A)$.
\item All 1-lookahead outcomes of $\Ga$ are global minima of the potential function and
$$
\{x(B)\mid B \text{ 1-lookahead outcome}\}=\{x(A)\mid A \in \mathcal{M}\}.
$$
\end{enumerate}
\end{proposition}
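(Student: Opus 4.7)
The plan is to prove both parts simultaneously by induction on the structure of the underlying series-parallel graph of $\Ga$. In the base case the graph is a single arc, so the only action profile (all $n$ players take that arc) is trivially both the unique 1-lookahead outcome and the unique global minimum of $\Phi$, and the statement holds. For the inductive step I decompose the graph either as a series composition or a parallel composition of strictly smaller SP-graphs, obtaining two sub-games $\Ga_1$ and $\Ga_2$; the induction hypothesis applies to $\Ga_j$ for any number of players, since an SNCG on a smaller SP-graph is again an SNCG on an SP-graph.

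For the series case, every action path decomposes uniquely as $P = P_1 \cup P_2$ with $P_j$ an action of $\Ga_j$, giving a bijection on action sets. Since the arc sets of the two subgraphs are disjoint, the potential splits as $\Phi(A) = \Phi_1(A|_{\Ga_1}) + \Phi_2(A|_{\Ga_2})$, where each sub-game still carries all $n$ players, and $\mathcal{M}$ decomposes coordinate-wise as a product. The cost seen by an arriving player is additive over the two subpaths, so his 1-lookahead best response optimizes independently on each side; hence 1-lookahead outcomes of $\Ga$ under an order $\sigma$ are exactly pairs of 1-lookahead outcomes of the two sub-games under the same order. Applying the induction hypothesis to both sides, and using the permutation-invariance theorem stated earlier in this section to realign orders between the two sides when needed, yields both parts for series composition.

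For the parallel case every player plays in exactly one sub-game, and a profile $A$ induces a split $n_1(A) + n_2(A) = n$ with $\Phi(A) = \Phi_1(A|_{\Ga_1}) + \Phi_2(A|_{\Ga_2})$, where the $j$th sub-game now has $n_j(A)$ players. Let $\phi_j(k)$ denote the common potential value of all 1-lookahead outcomes of $\Ga_j$ with $k$ players, well-defined by Corollary~\ref{cor:brpsamepotential}; by the induction hypothesis $\phi_j(k)$ equals the minimum of $\Phi_j$ on the $k$-player sub-game. The crucial observation is that $\psi_j(k) := \phi_j(k) - \phi_j(k-1)$ equals the opportunity cost of any $(k-1)$-player 1-lookahead outcome of $\Ga_j$, so applying Lemma~\ref{lem:oppcosts} to such a $(k-1)$-player profile against a $k$-player 1-lookahead outcome gives $\psi_j(k) \le \psi_j(k+1)$. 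Each $\psi_j$ is therefore non-decreasing, and since a 1-lookahead run in $\Ga$ assigns each arriving player to the side with smaller current marginal, a standard exchange argument for convex minimization shows the resulting split attains $\min_{n_1+n_2=n}\bigl(\phi_1(n_1)+\phi_2(n_2)\bigr)$, so every 1-lookahead outcome of $\Ga$ lies in $\mathcal{M}$. For the converse in part~1, given $A \in \mathcal{M}$ with split $(n_1^*, n_2^*)$, convexity of $\psi_j$ ensures $(n_1^*, n_2^*)$ is reachable by some greedy run with appropriate tie-breaking, after which the induction hypothesis supplies 1-lookahead outcomes on each side realizing the congestion vectors $x(A|_{\Ga_j})$.

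The main technical obstacle is the parallel case: beyond establishing convexity of $\psi_j$, one must verify that \emph{every} optimal split $(n_1^*, n_2^*)$, not merely some optimal split, is realizable as the allocation of a 1-lookahead run on $\Ga$. This amounts to exploiting ties of the form $\psi_1(k_1+1) = \psi_2(k_2+1)$ to re-route a new arrival to either side without leaving the set of global minima, and invoking permutation-invariance to realize arbitrary congestion vectors within each sub-game. Once this combinatorial flexibility is nailed down, combining the two inductive cases completes the proof of both parts.
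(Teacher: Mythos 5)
Your proof is correct and shares the paper's overall skeleton (structural induction on the series-parallel decomposition, with an identical treatment of the series case), but your handling of the parallel composition is genuinely different. The paper first reduces everything to part~1 via Corollary~\ref{cor:brpsamepotential} and then, in the parallel case, argues directly and recursively: by the induction hypothesis each of $\pi_1(A),\pi_2(A)$ supports a currently-cheapest path of its component, the cheaper of the two is a cheapest path of $\Gamma$ and is present in $A$, so a greedy player can pick it; one deletes that path and recurses on the induced subgame. You instead abstract the parallel case into a one-dimensional resource-allocation problem: the marginal potential increments $\psi_j(k)=\phi_j(k)-\phi_j(k-1)$ are exactly the opportunity costs of $(k-1)$-player $1$-lookahead outcomes, Lemma~\ref{lem:oppcosts} makes them non-decreasing, and greedy minimization of a sum of convex sequences both certifies that every $1$-lookahead split is optimal and that every optimal split is greedy-reachable. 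The step you flag as the remaining obstacle does close cleanly: if the partial run has exhausted side $1$ (i.e.\ $k_1=n_1^*$ while $k_2<n_2^*$), then greedy cannot strictly prefer side $1$, since that would give $\psi_1(n_1^*+1)<\psi_2(k_2+1)\leq\psi_2(n_2^*)$, contradicting the local optimality condition $\psi_2(n_2^*)\leq\psi_1(n_1^*+1)$ of the split; so greedy with suitable tie-breaking stays inside the box $[0,n_1^*]\times[0,n_2^*]$, and within each side it may follow the $1$-lookahead outcome supplied by the induction hypothesis because a cheapest path of the chosen component is then a cheapest path of $\Gamma$. What your route buys is a reusable quantitative fact (convexity of the marginal potential of $1$-lookahead outcomes on SP-graphs) and a cleaner separation between the combinatorics of the split and the recursion into components; what the paper's route buys is brevity, since it never needs the exchange argument and handles ties implicitly by always extracting a cheapest path that is already present in $A$.
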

Fotakis \cite[Lemma 3]{fotakis} proves the following proposition:  

\begin{proposition}[Lemma 3 in \cite{fotakis}]
\label{thm:fotakis}
Let $\Ga$ be an SNCG on an SP-graph with delay functions in class $\mathcal{D}$. For any global minimum $A$ of the Rosenthal potential function and any other profile $B$ we have $\sum_{i\in \N}c_i(A) \leq \rho(\mathcal{D}) \sum_{i\in \N}c_i(B)$,
where 
$$
\rho(\mathcal{D})=\left(1-\sup_{d\in \mathcal{D}\setminus \{0\}}\sup_{x,y\geq 0} \frac{y(d(x)-d(y))}{xd(x)}\right)^{-1}.
$$
\end{proposition}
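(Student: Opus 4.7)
My plan is to use a smoothness argument in the style of Pigou, combining the potential-minimum property of $A$ with a resource-wise comparison to $B$. First, I would unpack the definition of $\rho(\mathcal{D})$ by a straightforward rearrangement: since
$$\sup_{d\in \mathcal{D},\,x,y\ge 0}\frac{y(d(x)-d(y))}{x\,d(x)} \;=\; 1-\frac{1}{\rho(\mathcal{D})},$$
we obtain, for every $d\in\mathcal{D}$ and non-negative $x,y$,
$$y\,d(x) \;\le\; \left(1-\tfrac{1}{\rho(\mathcal{D})}\right)x\,d(x) \,+\, y\,d(y).$$
Applying this resource-by-resource with $x=x_r(A)$ and $y=x_r(B)$ and summing over $r\in R$ gives the basic bound
$$\sum_{r\in R} x_r(B)\,d_r(x_r(A)) \;\le\; \left(1-\tfrac{1}{\rho(\mathcal{D})}\right) C(A) \,+\, C(B),$$
where I write $C(X)=\sum_{i\in \N}c_i(X)=\sum_{r\in R} x_r(X)\,d_r(x_r(X))$.

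Second, I would establish the complementary inequality $C(A) \le \sum_{r\in R} x_r(B)\,d_r(x_r(A))$. Since $A$ is a global minimum of Rosenthal's potential $\Phi$, it is in particular a Nash equilibrium. By Proposition~\ref{prop:optima}, we may assume without loss of generality that $A$ is a 1-lookahead outcome (congestion vectors, and hence the value of $C$, coincide). For each player $i$, the Nash deviation $A_i\to B_i$ gives $c_i(A)\le c_i(B_i,A_{-i})$, and summing over $i$ yields
$$C(A) \;\le\; \sum_{i\in\N}\sum_{r\in B_i} d_r\!\bigl(x_r(A)+\mathbb{1}[r\notin A_i]\bigr) \;\le\; \sum_{r\in R} x_r(B)\,d_r(x_r(A)+1).$$
To shave off the spurious \textit{+1}, I would exploit the series-parallel structure. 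The nested intersection property of the action sets implies that whenever $r\in B_i\setminus A_i$, one can pair the deviating player $i$ with a distinct player $j$ such that $r\in A_j$, so the resource is already ``released'' by some $A$-player when $i$ joins it. Equivalently, one can use the opportunity-cost machinery behind Lemma~\ref{lem:oppcosts} (which matches 1-lookahead and potential-minimum congestion vectors on any SP-subgraph) to reorder the deviations so that the extra unit of congestion never arises.

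Combining the two inequalities, I obtain $C(A)\le (1-1/\rho(\mathcal{D}))C(A)+C(B)$, which rearranges to $C(A)\le \rho(\mathcal{D})\,C(B)$, as desired. The main obstacle is the second step: the standard Nash deviation argument produces the extra \textit{+1}, which would otherwise degrade the Pigou bound to the strictly worse atomic smoothness parameter. Removing this term is where the SP-graph hypothesis does real work; all the other steps are purely algebraic manipulations of the Pigou inequality. I would expect the detailed implementation of the pairing argument (or, alternatively, a flow-decomposition argument routing $B$ through $A$ along the series-parallel tree) to be the most delicate part of the proof.
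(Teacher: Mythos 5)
Your step 1 (the Pigou-type rearrangement of $\rho(\mathcal{D})$ and the resource-wise summation) is fine, and the final algebraic combination would indeed yield the claim. The problem is step 2: the ``complementary inequality'' $\sum_{i}c_i(A)\le \sum_{r}x_r(B)\,d_r(x_r(A))$ is simply false for atomic potential minimizers, even on two parallel links (which are both SP and EP). Take $n=2$, resources $r,s$ with $d_r(x)=x$ and $d_s(x)=1.5$ (both affine). The unique global minimum of Rosenthal's potential is the split $A$ with one player on each link ($\Phi=2.5$ versus $3$ for either concentration), and $\sum_i c_i(A)=1+1.5=2.5$. Taking $B$ to be both players on $r$ gives $\sum_{\rho}x_\rho(B)\,d_\rho(x_\rho(A))=2\cdot d_r(1)=2<2.5$. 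So no pairing or reordering argument can rescue the inequality: when $B$ loads a resource with more players than $A$ does, there are not enough $A$-players on that resource to ``release'' it for the deviators, and the nonatomic variational inequality genuinely fails. (A secondary issue: you invoke the nested-intersection property, but the paper establishes it only for EP-graphs; the proposition is claimed for all SP-graphs, where that property does not hold.)

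The route via ``potential minimizer $\Rightarrow$ Wardrop-like variational inequality $\Rightarrow$ Pigou bound'' therefore cannot work as written; with the honest atomic deviation (the $+1$) one only recovers the strictly larger atomic smoothness constant. The actual argument (this proposition is quoted from Fotakis and not reproved in the paper) does not pass through $\sum_r x_r(B)d_r(x_r(A))$ at all: it compares the congestion vectors $x(A)$ and $x(B)$ directly via an exchange/induction argument along the series-parallel decomposition, using that a potential minimizer on an SP-graph is locally optimal under unit flow swaps, so that $\sum_{r:\,x_r>y_r}(x_r-y_r)d_r(x_r)\le\sum_{r:\,x_r<y_r}(y_r-x_r)d_r(x_r+1)$ with $y=x(B)$, and only then applies the $\rho(\mathcal{D})$ inequality to the resulting terms (exploiting $x_r+1\le y_r$ for integer congestions). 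You would need to replace your step 2 with an inequality of this flow-exchange type rather than the nonatomic one.
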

For example, if $\mathcal{D}$ consists of all affine functions then $\rho(\mathcal{D}) = \frac{4}{3}$.
Fotakis \cite{fotakis} also proves that for SNCGs on SP-graphs, the Price of Stability is $\rho(\mathcal{D})$. Combining all these results, we obtain the following corollary.
\begin{corollary}
\label{cor:brps}
For every SNCG on an SP-graph with delay functions in class $\mathcal{D}$, we have $1$-$\LPoA \leq \rho(\mathcal{D})=\PoS$.
\end{corollary}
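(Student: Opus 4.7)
The plan is to derive the corollary by chaining together the three results stated just above it. The key observation is that the social cost in a congestion game depends only on the congestion vector: since $\sum_{i\in \N} c_i(A) = \sum_{r\in R} x(A)_r \, d_r(x(A)_r)$, two profiles with the same congestion vector yield the same utilitarian social cost. This allows me to transfer a bound that holds for potential-minimizing profiles to any profile with the same congestion vector.

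First I would fix an arbitrary $1$-lookahead outcome $B$ of the SNCG $\Ga$ on the SP-graph. By Proposition~\ref{prop:optima}(2), $B$ is itself a global minimum of the Rosenthal potential $\Phi$, and in particular there exists some $A\in\mathcal{M}$ with $x(A)=x(B)$. By the observation above, $\sum_{i\in \N} c_i(B) = \sum_{i\in \N} c_i(A)$.

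Next I would apply Proposition~\ref{thm:fotakis} to $A$ and to the social optimum $A^*$, which yields $\sum_{i\in \N} c_i(A) \leq \rho(\mathcal{D}) \sum_{i\in \N} c_i(A^*)$. Combining this with the previous equality gives
\[
\frac{\sum_{i\in \N} c_i(B)}{\sum_{i\in \N} c_i(A^*)} \leq \rho(\mathcal{D}).
\]
Since $B$ was an arbitrary $1$-lookahead outcome, taking the maximum over $1$-$\mathbb{LO}(\Ga)$ yields $1$-$\LPoA(\Ga) \leq \rho(\mathcal{D})$. Finally, invoking Fotakis's result that $\PoS(\Ga)=\rho(\mathcal{D})$ for SNCGs on SP-graphs with delay class $\mathcal{D}$ gives the identity $\rho(\mathcal{D})=\PoS$ and completes the proof.

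There is no real obstacle: all the heavy lifting is already done in Proposition~\ref{prop:optima} (which transports $1$-lookahead outcomes into the set of global potential minimizers) and in Proposition~\ref{thm:fotakis} (which bounds the social cost of potential minimizers). The only subtlety is the passage from ``same congestion vector'' to ``same social cost,'' which is immediate for utilitarian social cost in a congestion game but worth stating explicitly.
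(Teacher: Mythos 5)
Your proposal is correct and follows exactly the route the paper intends (the paper gives no separate proof, simply noting that the corollary follows by ``combining all these results''): Proposition~\ref{prop:optima}(2) places every $1$-lookahead outcome among the global minima of Rosenthal's potential, Proposition~\ref{thm:fotakis} bounds the social cost of such minima by $\rho(\mathcal{D})$ times the optimum, and Fotakis's $\PoS=\rho(\mathcal{D})$ result gives the equality. Your detour through a profile $A$ with the same congestion vector is harmless but unnecessary, since part~2 of Proposition~\ref{prop:optima} already asserts that $B$ itself is a global minimum, so Proposition~\ref{thm:fotakis} applies to $B$ directly.
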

Note that 1-lookahead outcomes are only guaranteed to be optimal Nash equilibria for games with the worst Price of Stability; a procedure for finding an optimal Nash equilibrium for every SNCG on a SP-graph is an NP-hard problem \cite{optimalNE}.

For series-parallel graphs, 1-lookahead outcomes have maximal worst cost among the Nash equilibria \cite{optimalNEmakespan}; we use $W(\Ga)$ to refer to the common worst cost of 1-lookahead outcomes.
\begin{lemma}
\label{lem:sncgworstcost}
Let $\Ga$ be an SNCG on an SP-graph with $n$ players and common action set $\A$. Let $P_1,\dots,P_m\in \A$ for $m< n$ and let $\Ga'$ be the subgame of $\Ga$ induced by $(P_1,\dots,P_m)$. Then $W(\Ga')\leq W(\Ga)$. 
\end{lemma}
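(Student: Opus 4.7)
The plan is to prove the inequality by induction on $n-m$, the number of players remaining in the subgame $\Ga'$, with Lemma \ref{lem:oppcosts} providing the crucial cost bound at each step. Recall that $\Ga^m$ denotes the game with only the first $m$ players, and that every 1-lookahead outcome of $\Ga$ realizes the common worst cost $W(\Ga)$.

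For the base case $n-m=1$, the game $\Ga'$ has a single player whose cost, playing greedily against the pre-loaded congestion from $(P_1,\dots,P_m)$, equals the opportunity cost $O_{(P_1,\dots,P_m)}(\Ga^m)$. Applying Lemma \ref{lem:oppcosts} with the ``large'' game taken to be $\Ga$ (on $n$ players), the ``small'' one to be $\Ga^m$ (on $m<n$ players), and the profile $A=(P_1,\dots,P_m)$, and picking any 1-lookahead outcome $B$ of $\Ga$, gives
\[
O_{(P_1,\dots,P_m)}(\Ga^m) \;\leq\; W_B(\Ga) \;=\; W(\Ga),
\]
so $W(\Ga')\leq W(\Ga)$. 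For the inductive step, let $B'=(B'_1,\dots,B'_{n-m})$ be a 1-lookahead outcome of $\Ga'$ with respect to the identity order. The first player in $\Ga'$ has cost $O_{(P_1,\dots,P_m)}(\Ga^m) \leq W(\Ga)$, by the same appeal to Lemma \ref{lem:oppcosts}. Let $\Ga''$ be the subgame of $\Ga$ induced by $(P_1,\dots,P_m,B'_1)$; this is a subgame with one more pre-loaded action, so by induction $W(\Ga'')\leq W(\Ga)$. By definition of a 1-lookahead outcome, the tail $(B'_2,\dots,B'_{n-m})$ is a 1-lookahead outcome of $\Ga''$, and the congestion vector produced by this tail in $\Ga''$ agrees with the one produced by $(B'_2,\dots,B'_{n-m})$ in $\Ga'$, so the costs of players $2,\dots,n-m$ are the same in both games. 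Hence $\max_{i\geq 2} c_i(B') \leq W(\Ga'') \leq W(\Ga)$, and combined with the bound on $c_1(B')$ we conclude $W(\Ga')=W_{B'}(\Ga')\leq W(\Ga)$.

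The main obstacle is essentially bookkeeping: one has to make sure that $(P_1,\dots,P_m)$ is a legitimate profile of $\Ga^m$ (which is automatic because the game is symmetric, so any $m$ players share the action set $\A$), and that the ``common worst cost'' notation $W(\cdot)$ indeed applies to $\Ga'$ and $\Ga''$ — both are SNCGs on the same SP-graph, so the invariance statement of \cite{optimalNEmakespan} used to define $W(\Ga)$ carries over. Given these verifications, the inductive mechanism is straightforward: each new greedy player costs at most $W(\Ga)$ by Lemma \ref{lem:oppcosts}, and induction handles the rest.
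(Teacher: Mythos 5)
Your overall strategy (induction on the number of remaining players, with Lemma \ref{lem:oppcosts} supplying the cost bound at each step) is close in spirit to the paper's argument, and your handling of players $2,\dots,n-m$ via the subgame $\Ga''$ is correct. However, there is a genuine gap in the inductive step: you bound the cost that the first player of $\Ga'$ pays \emph{at the moment of his greedy choice}, namely $O_{(P_1,\dots,P_m)}(\Ga^m)\leq W(\Ga)$, and then use this as a bound on $c_1(B')$, his cost in the \emph{completed} profile $B'$. These are not the same quantity: the players who enter after him may add congestion to the resources on his path and raise his cost. (Two parallel links with $d_r(1)=1$, $d_r(2)=1.5$, $d_s(1)=2$: the first greedy player pays $1$ when he chooses $r$ but $1.5$ in the final profile once the second player also takes $r$.) So your argument controls $\max_{i\geq 2}c_i(B')$ but leaves $c_1(B')$ unbounded, and the conclusion $W_{B'}(\Ga')\leq W(\Ga)$ does not follow as written.

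The paper closes exactly this hole by invoking Lemma 1 of \cite{optimalNEmakespan} (restated as Lemma \ref{lem:lastpaysworst} in the appendix): in any 1-lookahead outcome $B$ of $\Ga'$ the profile obtained by deleting the last player is a Nash equilibrium (greedy best-response yields equilibria on SP-graphs), hence the \emph{last} player pays the worst cost $W_{B}(\Ga')$. The last player's final cost coincides with his greedy cost, i.e., with the opportunity cost of the prefix $A'=(P_1,\dots,P_m,B_1,\dots,B_{n-m-1})$ in $\Ga^{n-1}$, and Lemma \ref{lem:oppcosts} (applied with $n-1<n$) gives $O_{A'}(\Ga^{n-1})\leq W(\Ga)$. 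This single observation replaces your entire induction: once the maximum is known to be attained by the last player, no recursion over $n-m$ is needed. You could also repair your own proof by adding the same citation to justify $c_1(B')\leq \max_i c_i(B')=c_{n-m}(B')$, but at that point the induction becomes redundant.
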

\begin{proof}
The subgame $\Ga'$ is again an SNCG (with $n-m$ players). By Lemma 1 in \cite{optimalNEmakespan}, the last player pays the worst cost in any 1-lookahead outcome $B$ of $\Ga'$. Moreover, the last player pays at most $W(\Ga)$:  by Lemma \ref{lem:oppcosts}, the profile $A'=(P_1,\dots,P_m, B_1,\dots,B_{n-m-1})$ satisfies $O_{A'}(\Ga^{n-1})\leq W(\Ga)$ where $\Ga^{n-1}$ denotes the game $\Ga$ with $n-1$ players. Hence the greedy choice will cost at most $W(\Ga)$.   
\end{proof}

In particular, when at most $n-1$ players have chosen a path, there is still a path available of cost at most $W(\Ga)$. On EP-graphs, Nash equilibria (and thus also 1-lookahead outcomes) have optimal egalitarian social cost \cite{epstein}.
\begin{theorem}
\label{thm:SPOgoodcost}
For every SNCG $\Ga$ on an EP-graph, each SPO $A$ has optimal egalitarian social cost, i.e., $W_A=W(\Ga)$.
\end{theorem}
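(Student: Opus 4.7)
The plan is to establish both inequalities $W_A \leq W(\Ga)$ and $W_A \geq W(\Ga)$. The lower bound is essentially free: the Epstein et al.\ result cited just above the theorem states that every Nash equilibrium on an EP-graph has optimal egalitarian social cost. Since $W(\Ga)$ is the common egalitarian cost of 1-lookahead outcomes (which are Nash equilibria), this means $W(\Ga) = \min_{A' \in \Pro} W_{A'}$, and hence $W_A \geq W(\Ga)$ for every action profile $A$.

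For the upper bound $W_A \leq W(\Ga)$, I would proceed by induction on the number of players $n$. The base case $n=1$ is immediate since the unique SPO coincides with the unique 1-lookahead outcome. For the inductive step, assume the statement for all SNCGs on EP-graphs with fewer than $n$ players. Let $A$ be an SPO with respect to an order which, after relabelling, I may take to be the identity, and let $S$ be the associated SPE. For each candidate first action $P \in \A$, the SPE $S$ prescribes a continuation in the subgame $\Ga^{(P)}$ induced by $(P)$; denote its outcome by $C^P = (C^P_2,\dots,C^P_n)$. Since $\Ga^{(P)}$ is again an SNCG on the same EP-graph (only the delays on arcs of $P$ are shifted by one) with $n-1$ players, the induction hypothesis gives $W_{C^P}(\Ga^{(P)}) \leq W(\Ga^{(P)})$, and Lemma~\ref{lem:sncgworstcost} yields $W(\Ga^{(P)}) \leq W(\Ga)$. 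Translating $W_{C^P}(\Ga^{(P)})$ back into costs in $\Ga$, this means $c_i(P,C^P) \leq W(\Ga)$ for every player $i \geq 2$.

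What remains is to control the cost of player 1. The key observation is that player 1's cost in $(P,C^P)$ is itself bounded by the egalitarian cost of that profile, so the same bound $c_1(P,C^P) \leq W(\Ga)$ holds for every $P$; no use of player 1's subgame-perfect rationality is needed here. Specializing to $P = A_1$ and using that $(A_1, C^{A_1}) = A$ by definition of the SPO, we conclude $c_i(A) \leq W(\Ga)$ for every $i$, which gives $W_A \leq W(\Ga)$ and closes the induction.

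The main point to verify carefully is that the induction hypothesis transfers cleanly to the subgame $\Ga^{(P)}$: one must check that $\Ga^{(P)}$ is itself an SNCG on an EP-graph (same underlying network, only modified delays), that the restriction of $S$ to the subgame starting after $P$ is an SPE of the sequential-move version of $\Ga^{(P)}$ realising $C^P$, and that Lemma~\ref{lem:sncgworstcost} applies uniformly to each such subgame. All three are standard consequences of the definitions and of the general SP-graph phrasing of Lemma~\ref{lem:sncgworstcost}, so I do not expect a serious obstacle; the only mild subtlety is that in the presence of ties there may be several legal choices of $C^P$ for the same $P$, but the argument above works for any one of them.
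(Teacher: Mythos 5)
Your reduction for players $2,\dots,n$ is sound and matches the first half of the paper's argument (the paper phrases it as: the last player paying more than $W(\Ga)$ must be player $1$), and the lower bound $W_A\geq W(\Ga)$ is indeed free from the Epstein et al.\ result. But the step that is supposed to finish the proof --- ``player 1's cost in $(P,C^P)$ is itself bounded by the egalitarian cost of that profile, so $c_1(P,C^P)\leq W(\Ga)$'' --- is circular. The egalitarian cost of $(P,C^P)$ is $\max_{i\in\N}c_i(P,C^P)$, a maximum that \emph{includes} player 1; you have only shown $c_i(P,C^P)\leq W(\Ga)$ for $i\geq 2$, so bounding $c_1$ by that maximum gives nothing. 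Worse, the assertion is false for general $P$: take two parallel arcs with $d_r(x)=x$, $d_s(x)=10x$ and two players. Then $W(\Ga)=2$, but if player 1 plays $P=s$ and player 2 plays the subgame-perfect response $r$, the profile $(s,r)$ has $c_1=10>W(\Ga)$ while $c_2=1\leq W(\Ga)$. So the bound on player 1's cost cannot hold ``for every $P$''; it is only true for the subgame-perfect choice $P=A_1$, and proving that is the actual content of the theorem.

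The missing ingredients are exactly the two tools you declared unnecessary. The paper argues by contradiction: if $c_1(A)>W(\Ga)$, then by player 1's subgame-perfect rationality even the path $P$ that is cheapest at the start of the game must end up costing more than $W(\Ga)$ once the successors play their SPE continuation $B$ in the subgame induced by $(P)$ (otherwise player 1 would deviate to $P$). Combined with your correct bound $c(B_i,x)\leq W(\Ga)$ for all $i>1$, no successor can use all of $P$, yet some successor's path must overlap $P$ for its cost to have risen. Taking $i$ maximizing $|P\cap B_i|$ and invoking the nested intersections property of EP-graphs, the arcs of $P\setminus B_i$ remain unused, and comparing $c(B_i,x)$ with $c(P,x)$ (using that $P$ was initially cheapest) forces $c(B_i,x)\geq c(P,x)>W(\Ga)$, a contradiction. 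Both player 1's rationality and the EP-structure are essential; without them the conclusion fails, as the introductory example of the paper (where player 1 is pushed onto a suboptimal path) already suggests.
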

\begin{proof}
We prove the claim by induction on the number of players in the game. If $\Ga$ has $n=1$ players then the claim is true. Assume the claim is true for all subgame-perfect outcomes $A$ in games $\Ga$ with at most $k$ players for some $k\geq 1$. Suppose towards contradiction that there is a game $\Ga$ with $n=k+1$ players for which there exists an SPO $A$ with $W_A(G)>W(\Ga)$. Renumber the players so that $A$ corresponds to the identity.
Let $j$ be the last player paying a cost worse than $W(\Ga)$ in the profile $A$. If $j\neq 1$, then the subgame $\Ga'$ induced by $(A_1,\dots,A_{j-1})$ has at most $k$ players so that by induction the SPO $A'=(A_j,\dots,A_n)$ satisfies $W(\Ga')=W_{A'}(\Ga')>W(\Ga)$ (the strict inequality follows by the choice of $j$). This is a contradiction with Lemma \ref{lem:sncgworstcost}. Hence we assume $j=1$. 

Let $P$ minimize $\sum_{r\in P}d_r(1)$, i.e., choose the cheapest path at the start of the game. 
Consider the subgame $\Ga'$ induced by $(P)$. Take a subgame-perfect outcome $B=(B_i)_{i>1}$ for this subgame. Let $x$ be the congestion vector corresponding to $(P,B_{2},\dots,B_n)$.
\begin{enumerate}
\item Since player $1$'s subgame-perfect move costs more than $W(\Ga)$, path $P$ will eventually cost $c(P,x) :=\sum_{r\in P}d_r(x_r)> W(\Ga)$.
\item Since $B$ is an SPO in $\Ga'$ and $\Ga'$ has strictly less players than $\Ga$, we get $W_{B}(\Ga')=W(\Ga')$ by induction. Hence $c(B_i,x)\leq W_B(\Ga')=W(\Ga')\leq W(\Ga)$ for all $i>1$, where the last inequality follows from Lemma \ref{lem:sncgworstcost}.
\end{enumerate}
By 1. and 2., no successor of player $1$ can pick $P$. By 1., there must be a player picking a path having overlap with $P$. Let $i>1$ for which $|P\cap B_i|$ is largest. By the nested intersection property, no player $k>1$ picks a path overlapping with $P\setminus B_i$. It follows that 
\begin{align*}
c(B_i,x) & \geq \sum_{r\in B_i\cap P}d_r(x_r)+ \sum_{r\in B_i\setminus P} d_r(1) \\
& \geq \sum_{r\in B_i\cap P}d_r(x_r)+ \sum_{r\in P \setminus B_i} d_r(1)=c(P,x),
\end{align*}
where for the second inequality we use that $P$ was the cheapest path at the beginning of the game. By using 1. and 2., we conclude that 
$c(P,x)> W(\Ga)\geq c(B_i,x) \geq c(P,x)$,
which is a contradiction.
\end{proof}



\section{Extensions}
\label{sec:extensions}

We present our results for cost-sharing games and consensus games. 

\subsection{Cost-sharing games} 

A \textit{cost-sharing game} is a congestion game, where the delay functions are non-increasing. 
We first argue that subgame-perfect outcomes are not guaranteed to be stable, even for symmetric singleton cost-sharing games. 

\begin{example}
Consider a cost-sharing game with two resources $r,s$ with identical delay functions 
$d_r(x) = d_s(x) = d(x)$ such that $d(x) = 2$ for $x = 1$ and $d(x) = 1$ for $x > 1$. If three players play this game, the subgame-perfect outcomes $(r,s,s)$ and $(s,r,r)$ (which can be created by defining $S_3(x,y)=y$ and $S_2(x)\neq x$) are unstable. 
\end{example}

On the other hand, the above instability can be resolved for either symmetric or singleton cost-sharing games if no ties exist. 

\begin{theorem}
\label{thm:gscsg}
Let $G$ be a generic, symmetric cost-sharing game. Each $k$-lookahead outcome is a Nash equilibrium of the form $(P_k,\dots,P_k)$ for 
$P_k = \argmin_{A\in \mathcal{A}}\sum_{r\in A} d_r(k).$
In particular, each subgame-perfect outcome is optimal.
\end{theorem}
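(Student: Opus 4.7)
The plan is to reduce the theorem to a key lemma about subgame-perfect outcomes in symmetric cost-sharing games, and then chain this lemma along the sequence of player moves in the $k$-lookahead protocol. Both arguments hinge on the non-increasing property of cost-sharing delays. Throughout, for notational convenience assume $k\leq n$; the case $k>n$ reduces to $k=n$ by the convention $G^k=G^n$.

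First I would establish the following key lemma: in any generic symmetric $m$-player cost-sharing game with delay functions $(\tilde{d}_r)_{r\in R}$, the unique subgame-perfect outcome is $(Q,\dots,Q)$, where $Q:=\argmin_{A\in\mathcal{A}}\sum_{r\in A}\tilde{d}_r(m)$. I would prove this by induction on $m$; the base case $m=1$ is immediate. For the inductive step, suppose player $1$ plays $B$. The induced $(m-1)$-player subgame is again generic symmetric cost-sharing with shifted delays $\tilde{d}'_r(x)=\tilde{d}_r(x+\mathbf{1}_{r\in B})$, so by the inductive hypothesis every subsequent player chooses the same action $Q'(B):=\argmin_{A}\sum_{r\in A}\tilde{d}_r(m-1+\mathbf{1}_{r\in B})$. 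Player $1$'s cost is then $\sum_{r\in B\cap Q'(B)}\tilde{d}_r(m)+\sum_{r\in B\setminus Q'(B)}\tilde{d}_r(1)\geq\sum_{r\in B}\tilde{d}_r(m)$ by non-increasingness. A short comparison using the same shifting trick shows $Q'(Q)=Q$, so playing $Q$ yields cost exactly $\sum_{r\in Q}\tilde{d}_r(m)$, which is strictly smaller than $\sum_{r\in B}\tilde{d}_r(m)$ for any $B\neq Q$ by genericness. Hence player $1$ uniquely prefers $Q$ and the claim follows.

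Next I would apply the lemma in the $k$-lookahead setting by induction on the player index $i$. Suppose players $1,\dots,i$ have all played $P_k$. The induced subgame has effective delays $d_r^{(i)}(x)=d_r(x+i\cdot\mathbf{1}_{r\in P_k})$ and is again generic symmetric, and player $i+1$'s $k$-lookahead subgame has $m:=\min\{k,n-i\}$ players. By the lemma, player $i+1$ plays $Q^{(i)}:=\argmin_{A}\sum_{r\in A}d_r^{(i)}(m)$. The crucial computation is
\[
\sum_{r\in A}d_r^{(i)}(m)-\sum_{r\in P_k}d_r^{(i)}(m) \;=\; \sum_{r\in A\setminus P_k}d_r(m)-\sum_{r\in P_k\setminus A}d_r(m+i).
\]
Since $m\leq k\leq m+i$ in both branches of the minimum, non-increasingness gives $d_r(m)\geq d_r(k)\geq d_r(m+i)$, so the right-hand side is bounded below by $\sum_{r\in A\setminus P_k}d_r(k)-\sum_{r\in P_k\setminus A}d_r(k)$, which is strictly positive for $A\neq P_k$ by the defining property of $P_k$ and genericness. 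Thus $Q^{(i)}=P_k$, completing the induction.

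The Nash equilibrium check is analogous: a deviation of any player from $P_k$ to $A\neq P_k$ induces cost $\sum_{r\in A\cap P_k}d_r(n)+\sum_{r\in A\setminus P_k}d_r(1)$, and subtracting the equilibrium cost $\sum_{r\in P_k}d_r(n)$ leaves $\sum_{r\in A\setminus P_k}d_r(1)-\sum_{r\in P_k\setminus A}d_r(n)$, which is non-negative by the same chain of inequalities. For optimality when $k=n$, any profile $A$ satisfies $c_i(A)=\sum_{r\in A_i}d_r(x_r)\geq\sum_{r\in A_i}d_r(n)\geq\sum_{r\in P_n}d_r(n)$ using $x_r\leq n$ and the defining property of $P_n$, so summing over $i$ yields $SC(A)\geq n\sum_{r\in P_n}d_r(n)=SC(P_n,\dots,P_n)$. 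The main obstacle will be the bookkeeping across the two nested inductions: every comparison pits a sum over $A\setminus P_k$ (lower congestion, higher delay) against a sum over $P_k\setminus A$ (higher congestion, lower delay), and the sandwich $m\leq k\leq m+i$ is precisely what makes $k$ the right ``reference'' congestion at which the defining minimizing property of $P_k$ can be invoked.
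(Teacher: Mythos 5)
Your proposal is correct, and its overall strategy matches the paper's: identify the unique SPO of each truncated lookahead game as the constant profile on the congestion-level minimizer, then chain this along the player sequence using the non-increasing delays. The main divergence is in how the key sub-result (the unique SPO of a generic symmetric $m$-player cost-sharing game is $(Q,\dots,Q)$ with $Q=\argmin_A\sum_{r\in A}d_r(m)$) is obtained: the paper derives it from a general dominance lemma (its Lemma on unique SPOs: a profile in which every player is strictly better off than in any profile where he deviates is the unique SPO), whereas you prove it directly by backward induction on $m$. Both work; the paper's route is shorter once the lemma is available, while yours is self-contained. Your chaining step is actually more careful than the paper's in two respects: you handle the truncated horizon $m=\min\{k,n-i\}$ explicitly via the sandwich $m\le k\le m+i$ (the paper only spells out the step from player $1$ to player $2$ and says ``continue inductively''), and you explicitly verify the Nash-equilibrium and optimality claims, which the paper leaves implicit. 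One small point worth stating explicitly in your key lemma is that strictness of the comparison $\sum_{r\in B}d_r(m)>\sum_{r\in Q}d_r(m)$ for $B\neq Q$ uses genericness applied to the profiles $(B,\dots,B)$ and $(Q,\dots,Q)$ with $j=1$; you invoke genericness correctly but it deserves the one-line justification.
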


The proof of Theorem~\ref{thm:gscsg} relies on the following lemma.

\begin{lemma}
\label{prop:uniqueSPO} 
Let $A$ be an outcome so that for any $B$ with $B_i\neq A_i$ for some $i\in \N$, we have $c_i(A)>c_i(B)$. Then $A$ is the unique subgame-perfect outcome.
\end{lemma}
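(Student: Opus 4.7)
The plan is to induct on the number of players $n$. The hypothesis is a very strong ``dominance'' condition at $A$: any profile $B$ differing from $A$ in coordinate $i$ yields a strict cost comparison between $c_i(A)$ and $c_i(B)$, independently of what the other coordinates of $B$ do. This rigidity is exactly what backward induction propagates, and by relabeling we may assume the player order is the identity throughout.

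The base case $n=1$ is immediate: the hypothesis applied to profiles $B=(B_1)$ with $B_1\neq A_1$ singles out $A_1$ as player $1$'s unique optimal choice. For the inductive step, fix an $n$-player game $\Ga$ and, for each first action $X \in \A_1$, let $\Ga'(X)$ denote the $(n-1)$-player subgame on players $2, \dots, n$ induced by player 1 playing $X$. When $X = A_1$, the restricted profile $A_{-1}$ inherits the hypothesis inside $\Ga'(A_1)$: any profile $B' \in \prod_{i\geq 2}\A_i$ of the subgame with $B'_i \neq A_i$ for some $i \geq 2$ corresponds, via padding with $A_1$, to the profile $(A_1, B')$ of $\Ga$ that differs from $A$ in coordinate $i$, so the hypothesis on $\Ga$ transfers directly to the subgame costs. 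The induction hypothesis then makes $A_{-1}$ the unique SPO of $\Ga'(A_1)$, so playing $A_1$ leaves player 1 with continuation cost exactly $c_1(A)$. When $X \neq A_1$, \emph{any} SPO $C$ of $\Ga'(X)$, unique or not, produces a full profile $(X, C)$ differing from $A$ in coordinate $1$, and the hypothesis applied to $B = (X, C)$ delivers the strict comparison between $c_1(X, C)$ and $c_1(A)$ that rules out $X$ as player 1's subgame-perfect choice. Hence $A_1$ is uniquely chosen by player 1, the continuation is uniquely $A_{-1}$, and $A$ is the unique SPO of $\Ga$; existence of the corresponding SPE is automatic since we have constructed it via backward induction.

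The main subtle point is handling the possible non-uniqueness of SPOs in $\Ga'(X)$ for $X \neq A_1$: the argument requires the hypothesis to apply uniformly to \emph{every} profile $B$, not only to equilibria, which is exactly how the lemma is phrased, so any continuation the remaining players might choose still fits into a profile $(X, C)$ to which the strict inequality applies. This universality is also what makes the hypothesis pass to each induced subgame without extra work and, by the same relabeling argument, cover every starting order of the players.
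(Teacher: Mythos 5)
Your induction is correct, and it is essentially the only sensible route: the paper defers the proof of this lemma to its full version, so there is nothing in the text to compare against. The two points that actually need checking both hold: (i) the hypothesis is stated for \emph{all} profiles $B$, so it passes verbatim to the subgame induced by $(A_1)$ via padding, giving $A_{-1}$ as the unique SPO there by induction; and (ii) for $X\neq A_1$ the hypothesis applies to $(X,C)$ for \emph{whatever} continuation $C$ an SPE prescribes, so player $1$'s comparison at the root is unconditional. Together with Zermelo's existence of an SPE, this forces every SPO to equal $A$, for every player order since the hypothesis is symmetric in the players. One remark you should make explicit rather than glide over: the lemma as printed says $c_i(A)>c_i(B)$, but as used in the proof of Theorem~\ref{thm:gscsg} (where any deviator from $(P_n,\dots,P_n)$ is \emph{strictly worse off}) the intended hypothesis is $c_i(B)>c_i(A)$; with the inequality as literally written the conclusion is false already for $n=1$. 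Your argument silently uses the corrected direction, which is the right call, but you should state which direction you are assuming.
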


\begin{proof}[Proof of Theorem~\ref{thm:gscsg}]
We first show that each subgame-perfect outcome is of the form $(P_n,\dots,P_n)$, which is the optimal profile. Because the game is generic, no other action $Q\neq P_n$ can cost the same as $P_n$ does in the profile $(P_n,\dots,P_n)$. Hence for any outcome in which some player $i$ plays a different action than $P_n$, this player is \emph{strictly} worse off than in $(P_n,\dots,P_n)$. As a consequence, $(P_n,\dots,P_n)$ is the only subgame-perfect outcome by Lemma \ref{prop:uniqueSPO}.

Let $k\leq n$ be given. The game $\Ga^k$ has a $P_k$ as its unique optimal outcome.
Since $\Ga^k$ is again a generic, symmetric cost-sharing game, each subgame-perfect outcome is of the form $(P_k,\dots,P_k)$. Thus, any $k$-lookahead outcome has the action $P_k$ as its first action. Let $\Ga'$ be the subgame induced by $(P_k)$, then $\Ga'^k$ is again a generic, symmetric cost-sharing game. Since the delay functions are non-increasing, the profile $P_k$ is again the optimal profile. Hence the second player also picks $P_k$. The proof now follows by continuing inductively.
\end{proof}

We turn to inefficiency of $k$-lookahead outcomes. In general, the $k$-$\LPoA$ can increase with $k$ for generic, symmetric singleton cost-sharing games:

\begin{example}
Define delay functions $d_r(x)=\frac{n}x$ and $d_s(1)=n+1$ with $d_s(x)=2$ otherwise. Each $1$-lookahead equilibrium is optimal, whereas no $k$-lookahead equilibrium is optimal for $1<k<n$. This example can be extended make any subset of $[n-1]$ equal to the set $\{k\in[n-1]\mid k\text{-}\LPoA(\Ga)=1\}$. 
\end{example}

However, we can gain monotonicity by restricting the delay functions.
\begin{corollary}
For every generic, symmetric cost-sharing game $\Ga$ with delay functions of the form $d_r(x)=\frac{a_r}x+b_r$, $k$-$\LPoA(\Ga)$ is non-increasing in $k$.
\end{corollary}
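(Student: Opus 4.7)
The plan is to apply Theorem~\ref{thm:gscsg} to reduce the corollary to a one-variable monotonicity statement, and then exploit the specific form of the delay functions. By Theorem~\ref{thm:gscsg}, for every $1 \leq k \leq n$ the (unique up to permutation) $k$-lookahead outcome is $(P_k,\dots,P_k)$ with $P_k = \argmin_{A\in\mathcal{A}}\sum_{r\in A}d_r(k)$, and the optimum profile is $(P_n,\dots,P_n)$. Hence the utilitarian social cost of a $k$-lookahead outcome equals $n\sum_{r\in P_k}d_r(n)$ and
\[
k\text{-}\LPoA(\Ga) \;=\; \frac{\sum_{r\in P_k}d_r(n)}{\sum_{r\in P_n}d_r(n)}.
\]
It therefore suffices to show that $f(k) := \sum_{r\in P_k}d_r(n)$ is non-increasing on $\{1,\dots,n\}$.

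Next, I would substitute $d_r(x) = a_r/x + b_r$ by defining $\alpha(A) := \sum_{r\in A}a_r$ and $\beta(A) := \sum_{r\in A}b_r$, so that $\sum_{r\in A}d_r(x) = \alpha(A)/x + \beta(A)$ for every action $A$. For $k < k' \leq n$, the optimality of $P_k$ and $P_{k'}$ for their respective parameters gives
\[
\frac{\alpha(P_k)}{k} + \beta(P_k) \leq \frac{\alpha(P_{k'})}{k} + \beta(P_{k'}), \qquad \frac{\alpha(P_{k'})}{k'} + \beta(P_{k'}) \leq \frac{\alpha(P_k)}{k'} + \beta(P_k).
\]
Writing $\Delta_\alpha := \alpha(P_k) - \alpha(P_{k'})$ and $\Delta_\beta := \beta(P_{k'}) - \beta(P_k)$, these rearrange to the sandwich $k'\Delta_\beta \leq \Delta_\alpha \leq k\Delta_\beta$, which, because $k < k'$, forces $\Delta_\beta \leq 0$ (and consequently also $\Delta_\alpha \leq 0$).

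Finally, I would compare $f(k)$ with $f(k')$ by computing
\[
f(k) - f(k') \;=\; \frac{\Delta_\alpha}{n} - \Delta_\beta,
\]
and concluding as follows: since $\Delta_\beta \leq 0$ and $k' \leq n$, we have $k'\Delta_\beta \geq n\Delta_\beta$, and combining this with the lower bound $\Delta_\alpha \geq k'\Delta_\beta$ from the sandwich yields $\Delta_\alpha \geq n\Delta_\beta$, i.e.\ $f(k) \geq f(k')$, as required. The main (and essentially only) subtlety is the sign bookkeeping in the sandwich inequality and matching the factor $k'$ with the denominator $n$ that comes from evaluating $d_r$ at $n$ players; everything else is a direct consequence of the structural Theorem~\ref{thm:gscsg} together with the linearity of $A \mapsto \alpha(A)$ and $A \mapsto \beta(A)$ in the chosen resources.
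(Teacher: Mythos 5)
Your proof is correct and follows essentially the same route as the paper: both arguments rest on Theorem~\ref{thm:gscsg} plus the two optimality inequalities for $P_k$ and $P_{k'}$, combined linearly with the factor $\frac1k-\frac1n$ (resp.\ your $k'\le n$ step) to compare costs at $n$ players. The only cosmetic differences are that you argue directly for arbitrary $k<k'$ where the paper argues by contradiction for consecutive indices $k-1,k$, and that you make the reduction of $k$-$\LPoA$ to the single quantity $\sum_{r\in P_k}d_r(n)$ explicit.
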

\begin{proof}
Theorem~\ref{thm:gscsg} shows that each $k$-lookahead outcome is of the form $(P_k,\dots,P_k)$ for $P_k=\argmin_{A\in \mathcal{A}}\sum_{r\in A} d_r(k)$. Let $n$ the number of players. Denote $a_{k} =\sum_{r\in P_k}a_r$ and $b_{k}=\sum_{r\in P_k}b_r$, so that $\sum_{r\in P_k}d_r(\ell)=\frac{a_k}\ell+b_k$.

If $k$-$\LPoA(\Ga)>(k-1)$-$\LPoA(\Ga)$, then in particular $P_k\neq P_{k-1}$. This implies $\frac{a_{k}}{k-1}+b_{k}>\frac{a_{k-1}}{k-1}+b_{k-1}$ and $\frac{a_{k}}k+b_{k}<\frac{a_{k-1}}{k}+b_{k-1}$. Subtracting the first from the second and dividing by $\frac1{k-1}-\frac1k>0$ shows $a_k>a_{k-1}$. This implies $a_k(\frac1k-\frac1n)>a_{k-1}(\frac1k -\frac1n)$. Subtracting this from $\frac{a_{k}}k+b_{k}<\frac{a_{k-1}}{k}+b_{k-1}$ gives $\sum_{r\in P_k}d_r(n)<\sum_{r\in P_{k-1}}d_r(n)$.
\end{proof}

\begin{theorem}
\label{thm:triv}
If $G$ is a generic singleton cost-sharing game, then each SPO is an NE. However, $k$-lookahead equilibria are not guaranteed to be Nash equilibria for any $k<n$.
\end{theorem}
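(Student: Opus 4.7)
I would proceed by induction on the number of players $n$; the base case $n = 1$ is immediate. For the inductive step, fix an order (which we may take to be the identity) and let $A$ be the unique SPO of $G$ with respect to this order (uniqueness follows from genericity, as already noted in Section~\ref{sec:klo}). By the induction hypothesis applied to the subgame induced by $(A_1)$, the profile $A_{-1}$ is an NE in that subgame, so it suffices to show that player~$1$ has no profitable unilateral deviation.

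Suppose for contradiction that $d_{r'}(x_{r'}(A)+1) < c_1(A)$ for some $r' \neq A_1$, and let $B$ be the SPO of $G$ in which player~$1$ is forced to play $r'$ first. By SPO optimality together with genericity, $c_1(A) < c_1(B) = d_{r'}(x_{r'}(B))$, which combined with the assumed deviation inequality yields $d_{r'}(x_{r'}(A)+1) < d_{r'}(x_{r'}(B))$; hence $x_{r'}(B) \le x_{r'}(A)$ by non-increasingness of $d_{r'}$. Since $B_1 = r'$ while $A_1 \ne r'$, this means strictly fewer of the players $\{2, \dots, n\}$ pick $r'$ in $B$ than in $A$, even though in $B$ the resource $r'$ is effectively cheaper. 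I would rule this out via the following \emph{monotonicity lemma}: in a generic singleton cost-sharing game, pointwise decreasing $d_r$ does not decrease $x_r$ in the SPO, and pointwise increasing $d_r$ does not decrease $x_s$ for any $s \ne r$. Applying this lemma along a chain through the intermediate subgame in which both $d_{A_1}$ and $d_{r'}$ have been shifted down to account for one extra user yields $x_{r'}(B_{-1}) \ge x_{r'}(A_{-1})$, which contradicts the previous bound. The monotonicity lemma itself I would prove by a parallel induction on $n$ with a case distinction on the first mover's choice in the two compared games.

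\textbf{Part 2 (counterexample for every $k<n$).} For fixed $k < n$, take the generic singleton cost-sharing game with $n$ players, resources $r_1, r_2, t, s_2, \dots, s_n$, action sets $\mathcal{A}_1 = \{r_1, r_2\}$, $\mathcal{A}_{k+1} = \{r_2, t\}$ and $\mathcal{A}_i = \{s_i\}$ (a dummy) for every other $i$, and non-increasing delays with $d_{r_1}(1) = 3$, $d_{r_2}(1) = 10$, $d_{r_2}(2) = 2$, $d_t(1) = 100$ (any non-increasing, generic completion; the remaining delay values are never reached). In the identity order, player~$1$'s $k$-horizon lookahead encounters only dummies, so he greedily picks $r_1$ (cost $3 < 10$); players $2, \dots, k$ play their forced actions; and player $k+1$'s lookahead likewise sees only further dummies, so he picks $r_2$ (cost $10 < 100$). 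In the resulting $k$-lookahead outcome player~$1$ can deviate to $r_2$ for cost $d_{r_2}(2) = 2 < 3$, so it is not a Nash equilibrium.

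\textbf{Main obstacle.} The main difficulty is in Part~1, specifically establishing the monotonicity lemma. While the intuition (a resource becoming cheaper attracts more users) is natural, the comparison between the two relevant subgames changes the effective delays of two distinct resources simultaneously, so a single application of monotonicity does not suffice and one must chain comparisons through an intermediate game. The technically most delicate step in the inductive proof of the monotonicity lemma is the case in which the delay change causes the first mover to actually switch resources, which is where the case analysis becomes nontrivial.
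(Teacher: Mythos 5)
Your Part 2 is correct and is essentially the paper's own construction: two interacting players separated by dummies so that the first player's $k$-step horizon never reaches the player who will later join his alternative resource. Only the numbers and the position of the second ``real'' player differ from the paper's example ($d_r(x)=1/x$, $d_s(x)=4/(3x)$ with dummy players in between).

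Part 1 takes a genuinely different route from the paper, and it has a genuine gap: the second half of your monotonicity lemma is false. Take two players with $\mathcal{A}_1=\{r,u\}$, $\mathcal{A}_2=\{u,s\}$ and delays $d_u(1)=50$, $d_u(2)=1$, $d_s(1)=20$. If $d_r(1)=0.5$, the unique SPO (for either player order) is $(r,s)$, so $x_s=1$; if $d_r(1)=100$, it is $(u,u)$, so $x_s=0$. Hence pointwise increasing $d_r$ strictly decreases $x_s$: making $r$ expensive pushes player $1$ onto the shared resource $u$, which in turn pulls player $2$ off $s$. This is exactly the direction of the lemma your chain relies on (un-shifting $d_{A_1}$ in the intermediate game must not decrease $x_{r'}$), and the counterexample persists even when the increase is of the shift form $d_r(\cdot+1)\mapsto d_r(\cdot)$, so the argument does not close as written. (The first half of the lemma, that pointwise decreasing $d_r$ does not decrease $x_r$, is true.) The paper avoids comparative statics altogether: following Paes Leme et al., it peels off the resource $r^*$ minimizing $d_r(|N_r|)$, to which all of $N_{r^*}$ commit in every SPO by genericity and non-increasingness, and iterates; this shows the SPO is independent of the player order, and since the last player in an SPO always best-responds and every player can be placed last, the unique SPO is an NE. If you want to salvage your induction, you would need to replace the chained monotonicity claims by the single end-to-end comparison between the subgames induced by $(A_1)$ and by $(r')$, and that comparison is not established by your lemma.
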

\begin{proof}
We mimick the proof for fair cost-sharing games \cite{leme}.
Assume without loss of generality that all resources in $R$ can be chosen by some player. For $r\in R$, let $N_r=\{i\in \N:r\in \A_i\}$ be the set of players that can choose resource $r$. Because the game is generic, 
$$
r^*=\arg\min_{r\in R} d_r(|N_r|) 
$$
satisfies $d_{r^*}(|N_{r^*}|)<d_s(|N_s|)$ for all $s\in R\setminus \{r^*\}$. Hence if all successors and predecessors of $i\in N_{r^*}$ choose $r^*$, then this is the unique best response for player $i$ as well. This implies all player $i\in N_{r^*}$ will choose $r^*$ independent of the order the players arrive in. Removing these players from the game and repeating this at most $n$ times, we find that the resource a player picks does not depend on the order in which the players arrive. Since the last player in a subgame-perfect outcome is always giving a best response and each player can be seen as the last player, it follows that this unique SPO is a Nash equilibrium. 

An example of a 1-lookahead outcome which is not an NE can be created by putting
$d_r(x)=\frac1x,~d_s(x)=\frac4{3x}, ~\A_1=\{r,s\}, ~\A_2=\{s\}.$
This example can be extended to give unstable $k$-lookahead equilibria for any $k<n$ by putting players in between that do not interfere with the two players (e.g., by creating a third resource $t$ and setting $\A_i=\{t\}$ for $i\in \{2,\dots,n-1\}$). 
\end{proof}

\subsection{Consensus games}

In a consensus games, each player is a vertex in a weighted graph $\Gr = (V,E,w)$ and can choose between actions $L$ and $R$. The cost of player $i$ in outcome $A$ is given by the sum of the weights $w_{ij}$ of all incident edges $ij\in E$ for which $A_i\neq A_j$. 
The following example shows that subgame-perfect outcomes may be unstable in general. 

\begin{example}
\label{ex:treeorder}
Consider a graph with vertex set $V=\{1,2,3\}$ and edge set $E=\{\{2,3\},\{3,1\}\}$.
Choose weights $w_{23}> w_{13}>0$. For the order $1,2,3$ on the players, there is only one possibility to let player 3 act subgame-perfectly (since $w_{23}>w_{13}$): $S_3(x,y)=y$. Player 2 is hence indifferent between his two actions. So we can define $S_2(x)\neq x$ (``always choose an action different from player 1''). Player 1 is now indifferent as well 
and we can set $S_1=L$. This defines a subgame-perfect equilibrium for which the corresponding SPO is unstable.

Note that the graph in this example forms a tree. Call an order $\sigma$ \emph{tree respecting} if each player except the first succeeds at least one of his neighbours. Mimicking the proof of Proposition \ref{prop:consensusgames}, one can show that for each subgame-perfect outcome is optimal (i.e. of the form $(L,\dots,L)$ or $(R,\dots,R)$)  for such orderings.
\end{example}
\begin{proposition}
\label{prop:consensusgames}
Let $\Ga$ be a consensus game. If all players adopt a common tie-breaking rule, then all $k$-lookahead outcomes are optimal.
\end{proposition}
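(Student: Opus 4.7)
The plan is to show, for common tie-breaking rule $L \succ R$ (without loss of generality), that every $k$-lookahead outcome of $\Ga$ equals $(L,\dots,L)$, which has social cost $0$ and hence is optimal. The key conceptual step is to introduce a class of games closed under taking the subgame in which the tie-breaking-preferred action has been played. Call a \emph{biased consensus game} one where each player $i$ additionally incurs a cost $b_i \ge 0$ whenever they play $R$; ordinary consensus games correspond to $b_i = 0$ for all $i$. Crucially, the subgame induced by a predecessor $\ell$ playing $L$ is again of this form: each remaining neighbour $j$ of $\ell$ has its bias increased by $w_{\ell j} \ge 0$, and all other biases are unchanged, so the nonnegativity of biases is preserved.

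I would first establish the following key lemma by induction on the number of players $n$: in every biased consensus game with nonnegative biases and common tie-breaking $L \succ R$, the (unique) SPO under any order is $(L,\dots,L)$. For the inductive step, consider the first player. By the induction hypothesis applied to the $L$-subgame (which is again biased with nonneg biases), the first player's cost is $0$ when playing $L$. When playing $R$, their cost is at least $b_1$ plus a nonnegative disagreement term, independently of the $R$-subgame's outcome, so it is $\ge b_1 \ge 0$. Hence $L$ is weakly preferred, and the common tie-breaking rule selects $L$; applying the induction hypothesis once more to the $L$-subgame forces every subsequent player to play $L$ as well.

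Given the lemma, I would derive the proposition by induction on the position $i$ in the order $\sigma$. For $i = 1$, player $\sigma^{-1}(1)$ plays the first action of the SPO of $\Ga^k(\sigma)$, which is a pure (hence biased with zero biases) consensus game; by the key lemma this SPO is $(L,\dots,L)$, so player $\sigma^{-1}(1)$ plays $L$. For the inductive step, if all predecessors of player $\sigma^{-1}(i)$ have played $L$, then the induced subgame $\Ga'$ is a biased consensus game with nonneg biases, where each remaining player's bias is the sum of $w_{\ell j}$ over predecessor neighbours $\ell$. The key lemma applied to $(\Ga')^k(\sigma)$ then forces player $\sigma^{-1}(i)$ to play $L$.

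The main obstacle is precisely identifying the right class for the induction: the subgame of a pure consensus game is not itself a pure consensus game, because the fixed predecessor choices impose asymmetric fixed costs on the remaining players. Biased consensus games with nonneg biases form the minimal extension closed under ``play the tie-breaking-preferred action''. The asymmetry baked into the common tie-breaking rule is exactly what allows one branch of the game tree (the $L$-subgame) to stay inside this class, while the other branch (the $R$-subgame) can be dismissed by the blunt lower bound $\ge b_1 \ge 0$ on the first player's cost, without needing to analyse it at all.
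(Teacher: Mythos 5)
Your proposal is correct and follows essentially the same route as the paper: show that the tie-breaking-preferred consensus profile is the unique subgame-perfect outcome of every relevant (sub)game, because the preferred action yields cost $0$ under the inductive hypothesis while any deviation costs at least $0$, so the tie-breaking rule forces the preferred action. Your ``biased consensus game'' class is just a cleaner, explicit formalization of the subgame-closure step that the paper handles informally with the remark that $(R,\dots,R)$ remains the unique SPO after a prefix of preferred actions has been fixed.
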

\begin{proof}
Assume without loss of generality that $R$ is the preferred action of all players and the order on the players is $1,\dots,n$. (Otherwise, we can relabel the players and actions $R$ and $L$.) 

We first show that each SPO $A$ is of the form $(R,\dots,R)$.
Let $S$ be the subgame-perfect equilibrium corresponding to $A$. We show inductively that $S_i(R,\dots,R)=R$ for all $i\in \{n,\dots,1\}$, so that $A=(R,\dots,R)$ follows. The claim for player $n$ follows since $c_n(R,\dots,R)$ is the optimal possible cost for player $n$ and player $n$ plays $R$ when indifferent. Assuming $S_i(R,\dots,R)=R$ holds for players $n,\dots,k$ for some $k\in \{2,\dots,n\}$, player $k-1$ will get cost $c_{k-1}(R,\dots,R)$ if he plays $R$, which again implies that he must choose $R$. 

Let $A$ be a $k$-lookahead outcome for some $k\in\{1,\dots,n\}$.
The only subgame-perfect outcome corresponding to the order $1,\dots,k$ and common tie-breaking towards $R$ is given by $(R,\dots,R)$, so that $A_1=R$. After players 1 up to $\ell$ have fixed action $R$, the profile $(R,\dots,R)$ increases in value at least as much as any other profile, so that $(R,\dots,R)=R^{\min\{k,n-\ell\}}$ will be the only subgame-perfect outcome in the subgame induced by $(A_1,\dots,A_{\ell})=(R,\dots,R)$ for any $\ell\in \{1,\dots,n-1\}$. Hence $A=(R,\dots,R)$. (We find $A=(L,\dots,L)$ if ties are broken in favor of $L$.)
\end{proof}

\section{Future work}

While the focus in this paper is on congestion games, our notion of $k$-lookahead outcomes naturally extends to arbitrary normal-form games (details will be given in the full version of the paper). It will be interesting to study $k$-lookahead outcomes for other classes of games. Also, an interesting direction is to consider heterogeneous (e.g., player-dependent) anticipation levels. In particular, it would be interesting to further explore the relation between ties and anticipation within this framework.  

Another interesting question for future research is to investigate more closely the relationship between greedy best-response and subgame-perfect outcomes. Note that in Section 3 we were able to use results for $k = 1$ and $k = n$ to infer results for other values of $k$. This might be true for other games as well. Especially, it would be interesting to identify general properties of games for which such an  approach goes through.


\paragraph{Acknowledgements.}
We thank Pieter Kleer for helpful discussions. This work was done while the first author was a research intern at CWI and a student at the University of Amsterdam (UvA) .


\newpage





\newpage

\appendix 

\section{Subgame-perfect outcomes}
\label{sec:seqmove}
We formally define subgame-perfect outcomes and Sequential Price of Anarchy for congestion games. All definitions in this section extend easily to general normal-form games. 

When going through the definitions of sequential-move version and subgame-perfect equilibrium, it might be useful to compare them with the example in the introduction.
For any \Rosenthal game and order on the players, we can draw a game tree. This is formalised below.
\begin{definition}
Let $\Ga$ be an $n$-player \Rosenthal game. Let $\sigma:\N\to[n]$ be an order on the players (i.e. a bijection). The \textbf{sequential-move version} $\Ga^\sigma$ of $\Ga$ is the \emph{extensive-form game} in which each player $i\in \N$ chooses his action after observing the actions played by all players $j\in \N$ before him (those for which $\sigma(j)<\sigma(i)$). We represent the game by a directed tree. Each non-leaf node is a decision state and corresponds to a player and a set of outgoing arcs. Each leaf is an end state and corresponds to a cost vector. 
\begin{itemize}
\item The root node is the only state of the first player $\sigma^{-1}(1)$.
\item For any player $i\in \N$ there are $|\A_i|$ outgoing arcs from any of his states, each of which indexed by an action $A_i\in \A_i$. If $\sigma(i)\neq n$, then the head of an action $A_i\in \A_i$ is a state of his successor $\sigma^{-1}(\sigma(i)+1)$. If $\sigma(i)=n$ then each $A_i$ points to a leaf node.
\item Each leaf node has a cost vector $(c_i(A))_{i\in \N}$ for $A$ the profile on the path from the root to the leaf. (This gives an action profile because the unique path from the leaf to the root contains exactly one action from each player.)
\end{itemize}  
This defines the game tree. The set of players of $\Ga^\sigma$ is $\N$ again. A \emph{strategy} $S_i$ of a player $i\in \N$ in $\Ga^\sigma$ is a function
\[
S_i:\prod_{j\in \N:\sigma(j)<\sigma(i)} \A_j \to \A_i
\]
that maps any state of player $i$ to a single outgoing arc.
\end{definition}
Given strategies $S_i$ for $i\in \N$ in a sequential-move version $\Ga^\sigma$ of a game $\Ga$, we refer to $S=(S_i)_{i\in \N}$ as a \emph{strategy profile}. We correspond the following action profile $A$ in $\Ga$ to this. Let  $A_{\sigma^{-1}(1)}:=S_{\sigma^{-1}(1)}$. Having defined $A_{\sigma^{-1}(1)},\dots,A_{\sigma^{-1}(k)}$ for $1\leq k<n$, we let 
\[
A_{\sigma^{-1}(k+1)}=S_{\sigma^{-1}(k+1)}(A_{\sigma^{-1}(1)},\dots,A_{\sigma^{-1}(k)}).
\]
That is, we color the arcs that are prescribed by the strategies (similar to figure in the introduction) and follow the unique path from the root to a leaf. Let $A(S)$ be the action profile corresponding to $S$ in this way.

Write $S_{-i}=(S_j)_{j\in \N\setminus \{i\}}$. The strategy $S_i$ is a \emph{best response} to $S_{-i}$ if $c_i(A(S))\leq c_i(A(T))$ for all strategy profiles $T$ with $T_{-i}=S_{-i}$. A strategy profile $S$ is a \emph{Nash equilibrium} if $S_{i}$ is a best response to $S_{-i}$ for all $i\in \N$. 

Note that a Nash equilibrium $S$ of a sequential-move version is incomparable to a Nash equilibrium $B$ of the original game. However, we can compare $A(S)$ and $B$ and we will do this with an even stronger requirement on $S$.

We first need to define a subgame, which corresponds to fixing the actions of a subset of the players and looking at the game the remaining players are left with.
\begin{definition}
\label{def:inducedgame}
Let $\Ga=(\N,R,(\A_i)_{i\in \N},(d_r)_{r\in R})$ be a \Rosenthal game. Let $\N'\subseteq \N$ and $B_j\in \A_j$ for $i\in \N'$. The \emph{subgame induced by the action profile} $(B_j)_{j\in \N'}$ is the \Rosenthal game
\[
\Ga'=(\N\setminus \N', R, (\A_i)_{i\in \N\setminus \N'},(d'_r)_{r\in R}),
\]
where 
\[
d'_r:\mathbb{N}\to \R:y\mapsto d_r(y+|\{i\in \N':r\in B_i\}|).\qedhere
\]
\end{definition}
Suppose that in the definition above the players in $\N'$ act first in our order, that is, let $\sigma:\N\to\N$ a bijection so that $\sigma^{-1}\{1,\dots,|\N'|\}=\N'$. We call the sequential-move version $\Ga'^\sigma$ of $\Ga'$ a \emph{subgame} of $\Ga^\sigma$. The game tree of $\Ga'^\sigma$ is a subtree of the game tree of $\Ga^\sigma$, namely the one with as root the decision node of player $\sigma^{-1}(|\N'|+1)$ in which he needs to specify his action after observing $(B_i)_{i\in \N'}$. All such ``subtrees" give subgames of $\Ga^\sigma$.  
\begin{definition}
A \textbf{subgame-perfect equilibrium} (SPE) is a strategy profile $(S_i)_{i\in \N}$ that induces a Nash equilibrium in every subgame of a given sequential-move version. 
\end{definition}
The proposition below is implicit in the work of Zermelo \cite{zermelo} and often seen as one of the first results in game theory.
\begin{proposition}
Every sequential-move version has a subgame-perfect equilibrium.
\end{proposition}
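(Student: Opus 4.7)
The plan is to prove this by backward induction on the depth of the game tree, which is the classical argument going back to Zermelo and Kuhn. Since the game $\Ga$ has finitely many players and each player has a finite action set, the game tree of $\Ga^\sigma$ is finite, so induction on depth is well-founded.

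For the base case, I would take a decision node $v$ whose outgoing arcs all lead directly to leaves. If $v$ is a decision node of player $i$, then player $i$ faces a finite set of leaf cost vectors and can simply pick any outgoing arc minimizing his own coordinate $c_i$. This defines $S_i$ at $v$. For the inductive step, consider a decision node $v$ of some player $i$ at which the subtree rooted at each child $v'$ of $v$ has already been processed, so that strategies have been defined throughout those subtrees. By the induction hypothesis applied to each child-subtree, following the defined strategies from $v'$ produces a unique leaf, hence a well-defined cost vector $\gamma(v')$. Player $i$ then picks an outgoing arc at $v$ that minimizes the $i$-th coordinate of $\gamma(v')$ (breaking ties arbitrarily); this fixes $S_i(h)$ for the history $h$ leading to $v$. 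Iterating this until the root defines a complete strategy profile $S = (S_i)_{i\in \N}$.

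It remains to check that $S$ is a subgame-perfect equilibrium. Fix any subgame, corresponding to some node $w$ in the game tree, and let $j$ be the player who moves at $w$. I would show $S_j$ is a best response to $S_{-j}$ in this subgame. If $w$ is a leaf the statement is vacuous. Otherwise, the definition of $S_j$ at $w$ ensures $S_j$ picks the child minimizing $c_j$ among the cost vectors $\gamma(w')$ obtained by following $S$ in each child-subtree; thus no one-shot deviation at $w$ strictly decreases $c_j$. For deviations deeper in the subtree, I apply the induction hypothesis to the child-subtrees, which are themselves subgames: by construction the restriction of $S$ to each such subtree is already a subgame-perfect equilibrium there, so no deviation confined to a strict subtree of $w$ improves $c_j$ either. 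Combining these observations with the standard one-shot deviation principle for finite perfect-information games, no deviation of player $j$ at any collection of his decision nodes in the subgame rooted at $w$ strictly decreases his cost, so $S_j$ is indeed a best response.

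I do not expect any genuine obstacle in this argument; the entire content is backward induction on a finite tree and there are no ties or measurability issues to worry about. The only mildly delicate point is wording the verification step cleanly: one should either invoke the one-shot deviation principle explicitly, or (equivalently) induct simultaneously on depth to show ``$S$ restricted to the subgame at $v$ is an SPE of that subgame'' as the inductive statement, which avoids any need to combine deviations at different levels. I would adopt this stronger inductive statement to keep the proof self-contained.
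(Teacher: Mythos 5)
Your proof is correct and follows essentially the same route as the paper: backward induction on the finite game tree, choosing at each decision node an arc minimizing the mover's cost among the continuation values already computed. The only difference is that you make the verification step (via the one-shot deviation principle or the strengthened inductive statement) explicit, where the paper simply asserts that the resulting profile is subgame-perfect.
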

\begin{proof} 
Let $\Ga^\sigma$ be a given sequential-move version and relabel the players so that $\sigma$ is the identity. We can explicitly find a subgame-perfect equilibrium by performing \emph{backward induction}. We first find the equilibria in the lowest subgames: we specify an action for the last player given any combination of actions of the other players. The resulting costs are known, so we can (and need to) specify one which minimises his cost. This defines $S_n$. (We have drawn in a unique arc from each decision node on layer $n$ to the leaves on layer $n+1$.) Given $A\in \prod_{i=1}^{n-2}\A_i$ and $S_n$, we can now pick a best response $S_{n-1}(A)$ for player $n-1$ to $S_n$ in the subgame induced by $A$. This defines $S_{n-1}$. (We draw in the next layer by minimising the cost of player $n-1$.) Continuing this way, we get a strategy profile which is a subgame-perfect equilibrium. \end{proof}
There might be multiple choices for best responses at each stage; for generic games  there is always a unique best response and in that case there is also a unique SPE. 

A subgame-perfect equilibrium $S$ is a strategy profile, hence there is an action profile $A(S)$ of $\Ga$ that corresponds to this: the \emph{outcome on the equilibrium path} of $S$.
\begin{definition}
Let $\Ga$ be a \Rosenthal game and $\sigma$ an order of the players. An action profile $A$ in $\Ga$ is called a 
\textbf{subgame-perfect outcome} (SPO) of $\Ga$ (corresponding to $\sigma$) if $A$ is the outcome on the equilibrium path of a subgame-perfect equilibrium of $\Ga^{\sigma}$.
\end{definition}
We denote the set of subgame-perfect outcomes of a game by $\SE$. We will usually leave the order on the players implicit. A statement such as ``in  symmetric singleton congestion games, each SPO is an NE'', should be read as ``for every  symmetric singleton congestion game $\Ga$, for every order on the players $\sigma$, every subgame-perfect outcome in $\Ga$ corresponding to $\sigma$ is stable''.

Paes Leme et al. \cite{leme} introduce the \textbf{Sequential Price of Anarchy} of a game $\Ga$ as 
\[
\SPoA(G)=\max_{A\in \SE(G)}\max_{B\in \prod_{i\in \N}\A_i} \frac{\sum_{i\in \N}c_i(A)}{\sum_{i\in \N}c_i(B)}
\]
where $\SE(G)$ is the set of subgame-perfect outcomes of $G$.

\newpage

\section{Missing proofs of Section~\ref{sec:sncg}}
\label{app3}

\paragraph{Full proof of Theorem \ref{thm:corNE}.}
Let $A$ be an NE of $\Ga$. We want to find a \emph{close} generic game $\Ga'$ of which $A$ is also an NE, and hence a subgame-perfect outcome of $\Ga'$. 

The game $\Ga'$ has the same set of players, resources and actions as $\Ga$, but its delay functions $d'_r$ are possibly different. We say $\Ga'$ is \textit{close} to $\Ga$ if for all paths $P\neq Q$ and congestion vectors $(x_r)$ and $(y_r)$
$$
c(P,x):= \sum_{r\in P} d_r(x_r) < c(Q,y) \implies c'(P,x):=\sum_{r\in P} {d'}_r(x_r) \leq c'(Q,y).
$$
We first show that we can find a close generic game $\Ga'$ which also has $A$ as equilibrium. Note that we can increase the cost of individual paths without effecting the other path costs, since each path contains a resource he shares with no other path, as shown by Milchtaic \cite{milnetw}. We need to ensure that for any $P\neq Q$ and congestion vectors $x,y$, $c'(P,x) \neq c'(Q,y)$. Starting off with the game $\Ga$ (a close game with $A$ as NE), we iteratively update the delay functions to make it generic ($c(P,x)\neq c(Q,y)$ for $P\neq Q$) while not violating the two conditions we already satisfy. The closedness condition can be ignored by always increasing or decreasing the cost $c(P,x)$ by an amount less than
$\delta := \min\{ |c(P,x)-c(Q,y)|: c(P,x)\neq c(Q,y), Q \text{ path}, y\text{ possible congestion vector}\}>0.$
(The set is finite, so we can take the minimum.) If $P \in A$, $x=x(A)$ and $y$ is the congestion vector when one person choosing $P$ in $A$ switches to $Q$, then the fact that $A$ is a Nash equilibrium means that $c(P,x)\leq c(Q,y)$. In order to keep the Nash equilibrium condition, if we encounter an equality of such form we simply choose to increase the cost of $Q$. (Since $y\neq x$, increasing $c(Q,y)$ never hurts the NE property of $A$.) Following these rules, we increase the delays of the resources unique to a path, until $c(P,x)\neq c(Q,y)$ for $P\neq Q$. This defines a close generic game $\Ga'$ that has $A$ as NE. By the previous results, $A$ is a subgame-perfect outcome of $\Ga'$ corresponding to some order $\sigma$.

Let $S$ be the corresponding SPE. This is also a strategy profile in the sequential-move version $\Ga^\sigma$. It remains to show it is subgame-perfect in $\Ga^\sigma$. By relabelling the players, we may assume $\sigma=\Id_{[n]}$. Let $A_1,\dots,A_\ell\in \A$ be given. We need to prove $Q=S_{\ell+1}(A_1,\dots,A_{\ell})$ is at least as good for player $\ell+1$ as any other $P \in \A$ (given that his successors follow $S$). Let $x$ be the congestion vector of the outcome when he plays $P$ and $y$ for when he plays $Q$. 
If $P\neq Q$, then $c'(Q,y)<c'(P,x)$ since $\Ga'$ is generic and $P$ is subgame-perfect according to $S$.
By the contraposition of the closedness property, we find $c(Q,y)\leq c(P,x)$. This shows $S$ is an SPE of $\Ga^\sigma$ as well, so that $A$ is an SPO of $\Ga$. \qed

\paragraph{Induced games.}
If we want to do induction on the graph structure for an $n$-player SNCG $\Ga$ on SP-graph $\Gr$, then we need to define induced games on the components of $\Gr$. 

If $\Gr$ consists of two SP-graphs $\Gamma_1$ and $\Gamma_2$ in series, then $\Ga$ \textit{induces} an SNCG $\Ga_i$ on $\Gr_i$ for $i=1,2$. Game $\Ga_i$ is played on graph $\Gamma_i$, which defines the set of resources and the common action space, and $\Ga_i$ inherits the delay functions and the set of players from $\Ga$. For an action profile $A$ of $\Ga$ and $i=1,2$, we denote
$$
\pi_i(A)=(A_j \cap  E(\Gamma_i) :j\in \N)
$$
where $A_j\cap E(\Gamma_i)$ denotes the pathsegment of $A_j$ that is contained in $\Gamma_i$.

If $\Gr$ consists of SP-graphs $\Gr_1$ and $\Gr_2$ in parallel, then $\Ga$ \textit{induces} games on $\Gr_1$ and $\Gr_2$ if we specify which players play in either game. The games inherit the delay functions from $\Ga$ again and the set of resources and common action set from the graph. If $A$ is an action profile in $\Ga$, then we let
$$
\pi_i(A) = (A_j: j\in \N, ~A_j\subseteq \Gr_i) ~~(i=1,2).
$$
The game $\Ga_i$ induced by $A$ on $\Gr_i$ is the game induced by $\Ga$ with $|\pi_i(A)|$ players, namely the players $\{j\in \N:A_j\subseteq \Gr_i\}$.

When graphs are composed in series, $O_A(\Ga)=O_{\pi_1(A)}(\Ga_1)+O_{\pi_2(A)}(\Ga_2)$. 
For graphs in parallel, 
$W_A(\Ga)=\max\{W_{\pi_1(A)}(\Ga_1), W_{\pi_2(A)}(\Ga_2)\}$ and $O_A(\Ga)=\min\{O_{\pi_1(A)}(\Ga_1),O_{\pi_2(A)}(\Ga_2)$.

\paragraph{Proof of Lemma \ref{lem:oppcosts}.}
We use the following result of \cite[Lemma 1]{optimalNEmakespan} in the proof.
\begin{lemma}
\label{lem:lastpaysworst}
Let $\Ga$ be a symmetric congestion game with $n$ players. Let $A$ be an action profile and $k$ a player so that $A_{-k}$ is a Nash equilibrium (in $\Ga^{n-1}$). Then $k$ pays $W_A(\Ga)$.
\end{lemma}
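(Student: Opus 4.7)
My plan is to show that $c_j(A) \leq c_k(A)$ for every $j \neq k$, from which $c_k(A) = \max_{i \in \N} c_i(A) = W_A(\Ga)$ follows immediately. The key idea is to exploit symmetry: because all players share the common action set $\A$, the action $A_k$ is also available to player $j$, and since $A_{-k}$ is a Nash equilibrium of $\Ga^{n-1}$, player $j$ cannot strictly decrease their cost by switching from $A_j$ to $A_k$ in that profile. Writing out the congestion after this hypothetical switch (using that for $r \in A_k$ we have $x\bigl((A_k, A_{-\{j,k\}})\bigr)_r = x(A_{-k})_r + 1 - \mathbf{1}[r \in A_j]$), the Nash-optimality condition reads
$$
\sum_{r \in A_j} d_r(x(A_{-k})_r) \;\leq\; \sum_{r \in A_k} d_r\bigl(x(A_{-k})_r + 1 - \mathbf{1}[r \in A_j]\bigr).
$$

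The main step is to lift this into the desired inequality $c_j(A) \leq c_k(A)$ by accounting for the presence of player $k$ in $A$. Using $x(A)_r = x(A_{-k})_r + \mathbf{1}[r \in A_k]$, I will add the quantity
$$
\Delta \;:=\; \sum_{r \in A_j \cap A_k} \bigl(d_r(x(A_{-k})_r + 1) - d_r(x(A_{-k})_r)\bigr) \;\geq\; 0,
$$
non-negative by monotonicity of $d_r$, to both sides of the Nash inequality. On the left, $\Delta$ raises the argument on exactly the resources $r \in A_j \cap A_k$ by one, converting the sum into $\sum_{r \in A_j} d_r(x(A)_r) = c_j(A)$. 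On the right, $\Delta$ exactly cancels the $-\mathbf{1}[r \in A_j]$ correction on $A_j \cap A_k$ (and contributes nothing on $A_k \setminus A_j$), converting the sum into $\sum_{r \in A_k} d_r(x(A_{-k})_r + 1) = c_k(A)$. The mild technical point, and really the crux of the argument, is that the \emph{same} quantity $\Delta$ is added to both sides, so the Nash inequality is preserved and gives $c_j(A) \leq c_k(A)$. Since $j \neq k$ was arbitrary, $c_k(A) = W_A(\Ga)$ and the lemma follows.
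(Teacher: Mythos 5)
Your proof is correct. The paper does not actually prove this lemma---it imports it verbatim as Lemma 1 of \cite{optimalNEmakespan}---so there is no in-paper argument to compare against; your derivation (let player $j$ deviate to $A_k$ in the $(n-1)$-player game, which is legal by symmetry, then add the common correction term $\Delta$ over $A_j\cap A_k$ to both sides of the Nash inequality to account for player $k$'s presence) is the standard one, and the congestion bookkeeping checks out: the left side becomes $c_j(A)$ and the right side becomes $c_k(A)$, giving $c_j(A)\leq c_k(A)$ for all $j\neq k$ and hence $c_k(A)=W_A(\Ga)$. One small remark: the non-negativity of $\Delta$ plays no role---as you yourself observe, only the fact that the \emph{same} quantity is added to both sides matters---so monotonicity of the delay functions is not actually needed for this argument.
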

We prove for all series-parallel graphs $\Gamma$ that for any SNCG $\Ga$ on $\Gr$ with $m\geq n$ players, action profile $A$ of $\Ga^n$ and 1-lookahead outcome $B$ of $\Ga$, we have
$$
O_A(\Ga^n)\leq O_B(\Ga) \text{ and if }m>n \text{ then also } O_A(\Ga^n)\leq W_B(\Ga),
$$
by induction on the structure of $\Gr$. 
Note that by relabelling the players, we can assume without loss of generality that $B$ is a 1-lookahead outcome with respect to the identity.
If $\Gr$ is a single resource, then the statement holds. Suppose $\Gr$ consists of two SP-graphs $\Gamma_1$ and $\Gamma_2$ in series for which the statement has already been shown. Since $\pi_i(B)$ is also a 1-lookahead outcome in $\Gamma_i$ for $i=1,2$, we find by induction that
$$
O_A(\Ga^n)=O_{\pi_1(A)}(\Ga_1^n)+O_{\pi_2(A)}(\Ga_2^n)\leq O_{\pi_1(B)}(\Ga_1)+O_{\pi_2(B)}(\Ga_2)=O_B(\Ga)
$$
where $\Ga_i$ is the SNCG induced on $\Gamma_i$ for $i=1,2$. This proves the first statement. Suppose now $n<m$. Note that $B'=(B_1,\dots,B_{m-1})$ is a 1-lookahead outcome in $\Ga^{m-1}$ with $m-1\geq n$, so we can apply the first statement with $m'=m-1$ and Lemma \ref{lem:lastpaysworst} to find  $O_A(\Ga^n)\leq O_{B'}(\Ga^{m-1})= W_B(\Ga^m)$. 

Suppose $\Gr$ consists of two SP-graphs $\Gr_1$ and $\Gr_2$ in parallel (for which the statement has already been shown). For $i=1,2$, let $\Ga_i^{\ell}$ be the SNCG on $\Gr_i$ induced by $\Ga$ with $\ell$ players. Let $k= |\pi_1(A)|$ the number of players choosing from $\Gr_1$ in $A$. We have
$$
O_{A}(\Ga^n) = \min\{O_{\pi_1(A)}(\Ga_1^k),O_{\pi_2(A)}(\Ga_2^{n-k})\}.
$$
It hence suffices to upperbound either $O_{\pi_1(A)}(\Ga_1^k)$ or $O_{\pi_2(A)}(\Ga_2^{n-k})$.

Let $\ell$ be the number of players that choose paths in $\Gr_1$ in $B$ (our given 1-lookahead outcome). 
If $k=\ell$ and $n=m$ then we are done by induction. If this is not the case, then either $k<\ell$ or $n-k<m-\ell$, so we may assume $k<\ell$ by symmetry. Note that $\pi_1(B)$ is also a 1-lookahead outcome in $\Ga_1^\ell$, the SNCG induced on $\Gamma_1$, so that by induction we find
$$
O_{\pi_1(A)}(\Ga_1^{k})
\leq W_{\pi_1(B)}(\Ga_1^{\ell})\leq W_B(\Ga)
\leq O_B(\Ga),
$$
where the last equality is because $B$ is an NE. This proves both statements in this case. 
\qed

\paragraph{Proof of Proposition \ref{prop:optima}}
Assume there is a profile $B$ with $x(B)=x(A)$ for some $A\in \mathcal{M}$. Then $\Phi(A)=\Phi(B)$ so that $B\in \mathcal{M}$. By Corollary \ref{cor:brpsamepotential}, if this profile $B$ is a 1-lookahead outcome, all 1-lookahead outcomes are necessarily global minima of $\Phi$. It hence suffices to prove that for all $A\in \mathcal{M}$, there exists a 1-lookahead outcome $B$ with $x(A)=x(B)$.

We prove the statement by induction on the graph structure. For SNCG $\Ga^n$ with $n$ players on series-parallel graph $\Gamma$, let $\mathcal{M}(\Ga^n)$ denote the set of global minima of the game. When $\Gamma$ consists of a single resource, then the claim holds. If $\Gamma$ consists of two series-parallel graphs in series, then $\Ga^n$ induces games $\Ga_1^n$ and $\Ga^n_2$ on the components and 
$$
\mathcal{M}(\Ga^n)=\mathcal{M}(\Ga_1^n)\times \mathcal{M}(\Ga_2^n).
$$ 
Namely, if $\Phi_{\Ga^n}$ is the Rosenthal potential function corresponding to $\Ga^n$, then 
$$
\Phi_{\Ga^n}(A)=\Phi_{\Ga_1^n}(\pi_1(A))+\Phi_{\Ga_2^n}(\pi_2(A)),
$$
so that $A\in\mathcal{M}(\Ga^n) $ if and only if $\pi_i(A)\in \mathcal{M}(\Ga_i^n)$ for $i\in \{1,2\}$. If $A\in \mathcal{M}(\Ga^n)$, then for $i\in \{1,2\}$, $\pi_i(A)$ is a global minimum, hence by induction there is a 1-lookahead outcome $B_i$ inducing the same flows. Permuting the players in $B_1$ if necessary, we get a 1-lookahead outcome $(B_1,B_2)$ with the same congestion vector as $A$.

If $\Gamma_1$ and $\Gamma_2$ are placed in parallel, then $\Ga^n$ induces games $\Ga_1^k$ and $\Ga_2^{n-k}$ on the components for any $k\in \{0,\dots,n\}$. Let $A\in \mathcal{M}(\Ga^n)$ and $k=|\pi_1(A)|$. 
By induction, we find that $\pi_1(A)$ and $\pi_2(A)$ are 1-lookahead outcomes and it remains to prove that we can ``interweave" these into a 1-lookahead outcome $B$. Recall that each global minimum is a Nash equilibrium. 
This means that if $k=0,n$, then none of the players had a better response in the other component at any point, so that we are done with $B=A$. If this is not the case, we must have that $\pi_1(A)$ contains a cheapest path $P_1$ in $\Gamma_1$ (at the start) and $\pi_2(A)$ a cheapest path $P_2$ in $\Gamma_2$, so that either must be a cheapest path of $\Gamma$. Let $B_1$ denote this path. We now repeat this recursively. Remove the first occurrence of $B_1$ from the profile $A$ and apply the same reasoning for the game induced by $(B_1)$: again, either all players pick from the same component and we are done, or a cheapest path (with respect to the delay functions of the 1-player subgame induced by $(B_1)$) is chosen from both components. \qed


\newpage

\section{Generalization of $k$-lookahead outcomes to arbitrary games}

We elaborate on how our notion of $k$-lookahead outcomes can be generalized to arbitrary normal-form games. 
A \textit{(normal-form) game} $\Ga$ is a tuple $(\N,(\A_i)_{i\in \N}, (c_i)_{i\in \N})$ consisting of a finite set of players $\N=[n]$ and for each player $i\in \N$ a finite set of actions $\A_i$ and a cost function 
$c_i:\A_1\times\dots \times \A_n\to \R$.
Given a normal-form game $\Ga$ and an order $\sigma:\N\to [n]$ on the players, we need a cost function for player $i$
$$
c^{(j)}_{i}:\prod_{\ell \in \sigma^{-1}\{1,\dots,j\}} \A_\l \to \R
$$
that assigns each partial profile of actions of the players up to the $j$th player to a cost for player $i$ (for every $j\in \{k,\dots,n\}$ and $i \in \sigma^{-1}\{j-k+1,\dots,j\}$). 
For $j=n$, the game already prescribes cost functions, so it is natural to set $c^{(n)}_i=c_i$ for all $i\in \sigma^{-1}\{n-k+1,\dots,,n\}$. This is the minimal information required for defining a $k$-lookahead outcome of $\Ga$ corresponding to order $\sigma$. 

We want to refer to a $k$-lookahead outcome of $\Ga$ as a $k$-lookahead outcome corresponding to \emph{some} order of the players and we therefore define a $k$-lookahead structure so that it allows the definition of $k$-looakhead outcomes for any order on the players. 
\begin{definition}
\label{def:lookaheadstructure}
Let $\Ga=(\N,(\A_i)_{i\in \N},(c_{i})_{i\in \N})$ be a normal-form game. For $k\in \{1,\dots,n\}$, a \textbf{$k$-lookahead structure} for $\Ga$ is a tuple of cost functions
$$
\mathcal{C}_k=( c^{(j)}_i\mid  j\in \{k,\dots,n\}, i\in \N)
$$
where for $j\in \{k,\dots,n\}$ and $i \in \N$, the $j$th \textit{partial cost function} for player $i$
$$
c^{(j)}_{i}:\bigcup_{\substack{i\in N\subseteq \N \\ |N|=j}} \prod_{\ell \in N} \A_\l \to \R
$$
assigns each partial profile of size $j$ to a cost for player $i$. Moreover, $c^{(n)}_i=c_i$ for all $i\in \N$.
\end{definition}
For technical convenience, we define a $k$-lookahead structure for $k>n$ to be an $n$-lookahead structure. Note that a $k$-lookahead structure is also a $(k+1)$-lookahead structure. We will also refer to a $1$-lookahead structure as a \textit{lookahead structure}, since this allows the definition of $k$-lookahead outcomes for any $k$.

Each game has only one $n$-lookahead structure (given by the cost functions of the game). 
For $k\in \{1,\dots,n-1\}$, there exist many $k$-lookahead structures for every game $\Ga$.
\begin{example}
Let $\Ga$ be a normal-form game and $k<n$. For any $x\in \R$, we can define the ``constant $x$'' $k$-lookahead structure on $\Ga$: for $j\in \{k,\dots,n-1\}$, $i\in \N$ and $N$ a subset of players of size $j$ containing player $i$, let 
$c^{(j)}_{i}: (A_\ell)_{\ell \in N}\mapsto x.$
\end{example}
Ideally, we would like a $k$-lookahead structure to give a ``good prediction'' of the cost a player will end up getting.
We can use the set of costs 
$$
S=\left\{c_i(A_1,\dots,A_j,B_{j+1},\dots,B_n)\mid B_{j+1}\in \A_{j+1},\dots, B_n\in \A_n \right\}
$$ 
that player $i$ can possibly get in the future to give an intermediate cost for player $i$ (e.g. by setting $c_{i}^{(j)}(A):= \max(S)$). 
However, the size of the set $S$ can grow exponentially in the number of players, so that we lose the computational advantage of not expanding the game tree. Defining $k$-lookahead structures this way will therefore not be very useful in practice.

A $k$-lookahead structure also induces a $k$-lookahead structure on subgames, as shown below.
Let $\{c_i^{(j)}\}$ be a $k$-lookahead structure for a game $\Ga$. Let $A=(A_i)_{i\in [m]}$ give actions of players $1,\dots,m$. Then the subgame $\Ga' = (\N',(\A_i)_{i\in \N'}, (c'_i)_{i\in \N'})$ induced by $A$ has player set $\N'=\{m+1,\dots,n\}$ with cost functions
defined by $c'_i(B)=c_i(A,B)$ for $i\in \N'$. We can define a $k$-lookahead structure on $\Ga'$ by setting 
$$
{c'}_{i}^{(j)}:\bigcup_{\substack{i\in N\subseteq \N'\\|N|=j}}\prod_{\ell \in N} \A_\l \to \R:B\mapsto c_{i}^{(j)}(A, B)
$$
for $j\in \{k,\dots, n-m\}$ and $i\in \N'$.
For a game $\Ga=(\N,(\A_i)_i,(c_i)_i)$ with $k$-lookahead structure $\mathcal{C}_k=(c^{(j)}_i)_{i,j}$ for $k\leq n$ and any given order of the players $\sigma$, we can define a normal-form game 
$$
\Ga^k=\Ga^k(\sigma,\mathcal{C}_k):=(\N',(\A_i)_{i\in \N'}, (c^{(k)}_i)_{i\in \N'})
$$
by setting $\N'= \sigma^{-1}\{1,\dots,k\}$, the first $k$ players. If $k>n$, then we define $\Ga^k := \Ga^n$.
The definition of $k$-lookahead structure is chosen in such a way that $(\Ga')^k$ is defined for each subgame $\Ga'$. 
\begin{definition}
Let $\Ga$ be an $n$-player game and $\mathcal{C}_k$ a $k$-lookahead structure for $k\leq n$. 
An action profile $A$ is a \textbf{$k$-lookahead outcome} of $\Ga$ (with respect to $\mathcal{C}_k$) if there exists an order $\sigma$ on the players so that for each $i\in \N$ we have 
that $A_{i}$ equals the action $B_i$ played by player $i$ in some subgame-perfect outcome $B$  of $(\Ga')^{k}(\sigma, \mathcal{C}_k)$ 
that corresponds to the order $\sigma$, where $\Ga'$ is the subgame of $\Ga$  induced by $(A_j)_{\sigma(j)<\sigma(i)}$.
\end{definition} 
Hence, in a $k$-lookahead outcome player 1 chooses an action which is found by performing $k$ levels of backward induction while using the $k$-lookahead structure to assign costs to the nodes on level $k+1$. Player 2 performs $\min\{k,n-1\}$ levels of backward induction in the subtree with as root the node reached by following the arc corresponding to the action of player 1 (using the $k$-lookahead structure to assign costs to the nodes on level $\min\{k+2,n+1\}$).

The definition of $k$-$\LPoA$ and the relation between $n$-lookahead outcomes and subgame-perfect outcomes generalise when we call a game generic if $c_i(A)\neq c_i(B)$ for all profiles $A,B$ with $A_i\neq B_i$. 
The definition we gave for \Rosenthal games corresponds to the one above if the lookahead structure ``current costs'' is used: for $j\in \{1,\dots,n\}$ and $N\subseteq \N$ a set of size $j$, we set 
$$
c^{(j)}_i(A)=\sum_{r\in A_i}d_r(x(A)_r)~\text{ where }~x(A)_r=|\{\ell \in N:r\in A_\ell\}|,
$$
for $i\in N$ and $A\in \prod_{i\in N}\A_i$.

\end{document}